\newcommand{\N}{\mathbb{N}}
\newmdtheoremenv[style=myenvs]{lemma}{Lemma}[section]
\newmdtheoremenv[style=myenvs]{corollary}[lemma]{Corollary}
\newmdtheoremenv[style=myenvs]{theorem}[lemma]{Theorem}
\newmdtheoremenv[style=myenvs]{fact}[lemma]{Fact}
\newmdtheoremenv[style=myenvs]{proposition}[lemma]{Proposition}
\newmdtheoremenv[style=myenvs]{conjecture}[lemma]{Conjecture}
\newmdtheoremenv[style=myenvs]{definition}[lemma]{Definition}
\newmdtheoremenv[style=myenvs]{claim}[lemma]{Claim}
\newenvironment{claimproof}[1][\proofname]{%
  \begin{proof}[#1]%
}{%
  \end{proof}%
}
\newcommand{\Cc}[0]{\ensuremath{\mathscr{C}}\xspace}
\newcommand{\Dd}[0]{\ensuremath{\mathscr{D}}\xspace}
\newcommand{\CC}[0]{\mathrm{\mathscr{C}}}
\newcommand{\DD}[0]{\mathrm{\mathscr{D}}}
\newcommand{\NN}[0]{\mathrm{\mathbb{N}}}
\newcommand{\struc}[1]{\mathbb{#1}}
\newcommand{\flip}[1]{\mathsf{#1}}
\newcommand{\KK}{\mathcal{K}}
\renewcommand{\phi}{\varphi}
\renewcommand{\le}{\leqslant}
\renewcommand{\ge}{\geqslant}
\renewcommand{\leq}{\le}
\renewcommand{\geq}{\ge}
\newcommand{\Oof}{\mathcal{O}}
\newcommand{\sless}{{\scriptscriptstyle <}}
\newcommand{\sgreater}{{\scriptscriptstyle >}}
\newcommand{\funding}{
N.M. was supported by the German Research Foundation (DFG) with grant greement No. 444419611.
S.T. was supported by the project BOBR that is funded from the European Research Council (ERC) under the European Union’s Horizon 2020 research and innovation programme with grant agreement No. 948057.
}
\title{Indiscernibles and Flatness in Monadically Stable and \\Monadically NIP Classes\thanks{\funding
}} 
\author{
  Jan Dreier
  \\
  \small{TU Wien}
  \\
  \small{\texttt{dreier@ac.tuwien.ac.at}}
  \and
  Nikolas M\"ahlmann
  \\
  \small{University of Bremen}
  \\
  \small{\texttt{maehlmann@uni-bremen.de}}
  \and
  Sebastian Siebertz
  \\
  \small{University of Bremen}
  \\
  \small{\texttt{siebertz@uni-bremen.de}}
  \and
  Szymon Toru\'nczyk
  \\
  \small{University of Warsaw}
  \\
  \small{\texttt{szymtor@mimuw.edu.pl}}
}
\date{}
\begin{document}

\maketitle

\begin{abstract}
Monadically stable and monadically NIP classes of 
structures were initially studied in the context of model theory
and defined in logical terms.
They have recently attracted attention in the area of structural graph theory,
as they generalize notions such as nowhere denseness, bounded cliquewidth, and bounded twinwidth.

Our main result is the -- to the best of our knowledge first --  purely combinatorial characterization of monadically stable classes of graphs, in terms of a property 
dubbed \emph{flip-flatness}. A class $\CC$ of graphs is flip-flat if for every fixed radius $r$, every 
sufficiently large set of vertices of a graph  $G\in\CC$ contains a large subset of vertices  with mutual distance larger than $r$, where the distance is measured in some graph $G'$ that can be obtained from $G$ by performing a bounded number of flips that swap edges and non-edges within a subset of vertices.
Flip-flatness generalizes the notion of uniform quasi-wideness, which characterizes nowhere dense classes and had a key impact on the combinatorial and algorithmic treatment of nowhere dense classes. 
To obtain this result, we develop tools that also apply to the more general monadically NIP classes,
based on the notion of indiscernible sequences from model theory.
We show that in monadically stable and monadically NIP classes 
indiscernible sequences impose a strong combinatorial structure on their definable neighborhoods.
All our proofs are constructive and yield efficient algorithms.
\end{abstract}

\paragraph*{Acknowledgements.}
We thank \'Edouard Bonnet, Jakub Gajarsk\'y, Stephan Kreutzer, Amer E. Mouawad and Alexandre Vigny for their valuable contributions to this paper. In particular, we thank Jakub Gajarsk\'y and Stephan Kreutzer for suggesting the notion of flip-flatness and providing a proof for the cases $r=1,2$.

\newpage


\section{Introduction}
An important open problem in structural and algorithmic graph theory
is to characterize those hereditary graph classes for which  the model checking 
problem for first-order logic is tractable\footnote{more precisely, \emph{fixed parameter-tractable}, that is, solvable in time $f(|\phi|)\cdot |G|^c$,
where $\phi$ is the input formula and $G$ is the input graph, 
for some function $f:\mathbb N\to\mathbb N$ and constant $c$}~\cite[Section 8.2]{grohe2008logic}.
A result of Grohe, Kreutzer, and Siebertz~\cite{grohe2017deciding} 
states that for monotone graph classes 
(that is, classes closed under removing vertices and edges),
 the limit of tractability is precisely captured by the notion of nowhere denseness,
introduced by Ne\v set\v ril and Ossona de Mendez~\cite{nevsetvril2011nowhere}. 
Examples of nowhere dense classes include the class of planar graphs,
all classes that exclude a fixed minor, and classes with bounded expansion.
Whereas these classes are sparse (for instance, they exclude some fixed biclique as a subgraph),
 the aforementioned problem seeks to generalize the result of 
Grohe, Kreutzer, and Siebertz to classes that are not necessarily sparse.
Indeed, there are known hereditary graph classes
that are not sparse, and 
for which 
the model checking problem is tractable, such as transductions of classes of bounded local cliquewidth~\cite{bonnet2022model}, transductions of nowhere dense classes~\cite{ssmc}, or classes of ordered graphs (that is, graphs equipped with a total order)
of bounded twinwidth~\cite{bonnet2022twin}.

So far, a complete picture is understood in two contexts:
for monotone graph classes, where tractability coincides with nowhere denseness,
and for hereditary classes of ordered graphs, where tractability coincides with 
bounded twinwidth. Despite the apparent dissimilarity of the combinatorial definitions
of nowhere denseness and bounded twinwidth, those notions can be 
alternatively characterized in a uniform way in logical terms
by the following notion, originating in model theory.
Unless mentioned otherwise, all formulas are first-order formulas.

Say that a class $\CC$ of graphs \emph{transduces} a class $\DD$ of graphs
if for every $H\in \DD$ there is some $G\in\CC$ 
from which $H$ can be obtained by performing the following steps:
(1) coloring the vertices of $G$ arbitrarily
(2) interpreting a fixed formula $\phi(x,y)$ (involving 
the edge relation and unary relations for the colors), thus yielding a new graph \(\phi(G)\)
with the same vertices as $G$ and edges $uv$ such that $\phi(u,v)$ holds, and finally
(3) taking an induced subgraph of \(\phi(G)\).
The transducability relation on graph classes is transitive, and classes 
that \emph{do not} transduce the class of all graphs are called \emph{monadically~NIP}.
For instance, the class of all bipartite graphs transduces the class of all graphs:
to obtain an arbitrary graph~$G$, consider its $1$-subdivision, obtained 
by placing one vertex on each edge of $G$, thus yielding a bipartite graph $H$;
then the formula $\phi(x,y)$ expressing that $x$ and $y$ have a common neighbor
defines a graph on $V(H)$ containing $G$ as an induced subgraph. Hence,
the class of bipartite graphs is not monadically NIP.
On the other hand, all the graph classes mentioned earlier --
nowhere dense classes and transductions thereof, classes of bounded twinwidth,
or transductions of classes with bounded local cliquewidth -- 
are monadically NIP. This suggests that monadic NIP 
might constitute the limit of tractability of the model checking problem.
More precisely, the following has been conjectured\footnote{To the best of our knowledge the conjecture was first explicitly discussed during the open problem session of the  Algorithms, Logic and Structure Workshop in Warwick, in 2016, see \cite{warwick-problems}.}.
\begin{conjecture}[\cite{warwick-problems}]\label{conj:mNIP}
Let  $\CC$ be a hereditary class of graphs. Then the model checking problem for first-order logic 
is fixed parameter tractable on $\CC$ if and only if $\CC$ is monadically NIP.
\end{conjecture}

 Quite remarkably, among monotone graph classes, 
 classes that are monadically NIP correspond precisely to nowhere dense classes~\cite{adler2014interpreting},
 and among hereditary graph classes of ordered graphs,
 classes that are monadically NIP correspond precisely to classes of bounded twinwidth~\cite{bonnet2022twin}.
One may tweak the definition of monadic NIP classes
by considering other logics than first-order logic.
For instance, for the counting extension $\mathsf{CMSO_2}$ of monadic second-order logic,
one recovers precisely the notion of classes of bounded cliquewidth \cite{COURCELLE200791},
or classes of bounded treewidth if only monotone classes are considered.
Thus, variations on the definition of monadic NIP recover 
important notions from graph theory: nowhere denseness, bounded twinwidth, bounded treewidth, and bounded 
cliquewidth.

 Note that both implications in Conjecture~\ref{conj:mNIP} remain open.
 The conjecture is so far confirmed 
for monotone graph classes~\cite{grohe2017deciding} (where monadically NIP classes are exactly 
the nowhere dense classes) and for hereditary classes of ordered graphs~\cite{bonnet2022twin}, tournaments~\cite{twinwidthTournaments}, interval graphs and permutation graphs \cite{twwVIII}, 
(where monadically NIP classes are exactly the classes of bounded twinwidth).
As a special important case, the conjecture predicts that all \emph{monadically stable} 
graph classes are tractable.
A class $\CC$ is {monadically stable} if it does not transduce 
the class of all half-graphs, that is, graphs with vertices $a_1,b_1,\ldots,a_n,b_n$
such that $a_i$ is adjacent to $b_j$ if and only if $i\le j$.
Although much more restrictive than monadically NIP classes,
monadically stable classes include all nowhere dense classes~\cite{adler2014interpreting},
and hence also all classes $\DD$ that transduce in a nowhere dense class $\CC$ (called structurally nowhere dense classes~\cite{gajarsky2020first}).
Those include dense graph classes, such as for instance squares of planar 
graphs.
In fact, it is conjectured~\cite{nevsetvril2021rankwidth} that every monadically stable class of graphs
is structurally nowhere dense.

These outlined 
connections between structural graph theory and model theory have recently triggered the interest to
generalize combinatorial and algorithmic results from nowhere dense
classes to structurally nowhere dense, monadically stable
and monadically NIP classes, and ultimately to efficiently solve the
model checking problem for first-order logic on these
classes~\cite{bonnet2022model,bonnet2022twin,gajarsky2020first,dreier2022treelike,gajarsky2021stable,kwon2020low,
  nevsetvril2021rankwidth,nevsetvril2022structural, nevsetvril2020linear}.
  Logical results on monadically stable and monadically NIP classes in  model theory
include~\cite{baldwin1985second,shelah1986monadic,braunfeld2021characterizations,blumensath2011}.  

\subsection*{Contribution}

As discussed above, many central graph classes such as those with bounded cliquewidth, twinwidth or nowhere dense classes can be characterized both from a structural (i.e., graph theoretic) and a logical perspective.
While monadically stable and monadically NIP graph classes are naturally defined via logic, structural characterizations have so far been elusive.
In this work we take a step towards a structure theory for monadically stable and monadically NIP classes of graphs, which is the basis for their future algorithmic and combinatorial treatment, in particular, a tool for approaching \cref{conj:mNIP}, as we discuss later.

\begin{figure}[h]
  \begin{center}
  \includegraphics[scale=0.8]{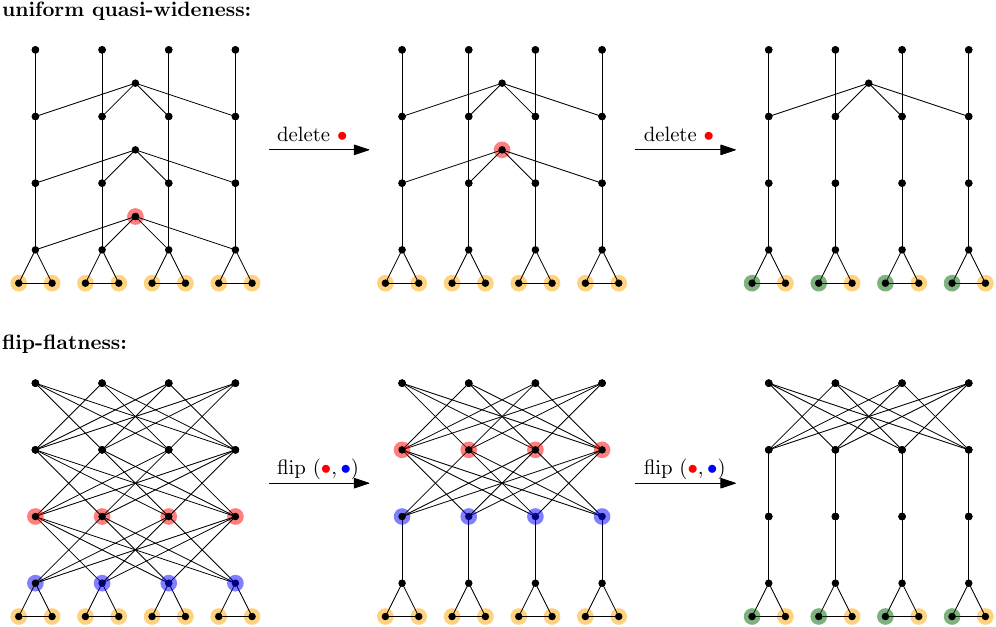}
  \end{center}
  \caption{
    An example of uniform quasi-wideness (flip-flatness).
    Among the \textcolor{Dandelion}{yellow} vertices, we find a still large set of \textcolor{ForestGreen}{green} vertices, that is distance-$7$ independent after deleting a bounded number of \textcolor{red}{red} vertices (performing a bounded number of flips between sets of \textcolor{red}{red} and \textcolor{blue}{blue} vertices).
    Two key properties are illustrated: 1. the higher the desired independence distance, the more operations have to be performed; 2. we cannot hope to make all the \textcolor{Dandelion}{yellow} vertices distance-$r$ independent with a bounded number of operations. 
    }
  \label{fig:wideness}
\end{figure}

\paragraph*{Flatness.}
Our main result, \cref{thm:fuqw}, provides a purely combinatorial characterization of monadically stable graph classes in terms of \emph{flip-flatness}.
Flip-flatness generalizes \emph{uniform quasi-wideness}\footnote{
Another reasonable name for flip-flatness would be \emph{flip-wideness}. 
We avoided this name to prevent confusion with the recently introduced graph parameter \emph{flip-width} \cite{torunczyk2023flipwidth}, which is studied in the same context.}, introduced by Dawar in~\cite{dawar2010homomorphism} in his study of homomorphism preservation properties.
Uniform quasi-wideness was proved by Ne\v{s}et\v{r}il and Ossona de Mendez~\cite{nevsetvril2011nowhere} to characterize nowhere dense graph classes
and is a key tool in the combinatorial and algorithmic treatment of these classes. 
To foster the further discussion, let us formally define this notion. 

\pagebreak
A set of vertices is \emph{distance-\(r\) independent} in a graph if 
any two vertices in the set are at 
distance greater than \(r\).
A class of graphs $\CC$ is \emph{uniformly quasi-wide}
if for every $r \in \NN$ there exists a function $N_r : \NN \rightarrow \NN$ and a constant $s\in \NN$ with the following property.
For all $m\in\NN$, $G\in \CC$, and $A\subseteq V(G)$ with $|A|\geq N_r(m)$, there exists $S\subseteq V(G)$ with $|S|\leq s$ and $B\subseteq A \setminus S$ with $|B|\geq m$ such that $B$~is distance-$r$ independent in $G-S$ (see the top of \Cref{fig:wideness}). Intuitively, for uniformly quasi-wide classes, in sufficiently large sets one can find large subsets that are distance-$r$ independent after the deletion of a constant number of vertices. Uniform quasi-wideness is only suitable for the treatment of sparse graphs. Already very simple dense graph classes, such as the class of all cliques, are not uniformly quasi-wide.

Inspired by the notion of uniform quasi-wideness, Jakub Gajarsk\'y and Stephan Kreutzer proposed the new notion of \emph{flip-flatness}. 
Roughly speaking, flip-flatness generalizes uniform quasi-wideness by replacing in its definition the deletion of a bounded size set of vertices by the inversion of the edge relation between a bounded number of (arbitrarily large) vertex sets.
Let us make this definition more precise. 
A \emph{flip} in a graph $G$ is specified by a pair of sets $\mathsf{F} = (A,B)$ with $A,B \subseteq V(G)$. We write $G \oplus \mathsf{F}$ for the graph 
with the same vertices as $G$, and edges $uv$ such that $uv \in E(G)$ xor $(u,v) \in (A \times B) \cup (B \times A)$. For a set $F =\{\flip F_1,\ldots,\flip F_n\}$ of flips, we write $G\oplus F$ for the graph $G \oplus \flip F_1 \oplus \dots \oplus \flip F_n$. 

\begin{restatable}{definition}{defflipwide}\label{def:flipwide}
  A class of graphs $\CC$ is \emph{flip-flat}
  if for every $r \in \NN$ there exists a function $N_r : \NN \rightarrow \NN$ and a constant $s_r\in \NN$ such that for all $m\in\NN$, $G\in \CC$, and $A\subseteq V(G)$ with $|A|\geq N_r(m)$, there exists a set $F$ of at most $s_r$ flips and $B\subseteq A$ with $|B|\geq m$ such that $B$ is distance-$r$ independent in~$G \oplus F$.
\end{restatable}
Intuitively, for flip-flat classes in sufficiently large sets one can find large subsets that are distance-$r$ independent after a constant number of flips (see bottom of \cref{fig:wideness}).
For example the class of all cliques is flip-flat, requiring only a single flip to make the whole vertex set distance-$\infty$ independent.
Our main result is the following purely combinatorial characterization of monadic stability.

\begin{restatable}{theorem}{thmfuqw}\label{thm:fuqw}
  A class of graphs is monadically stable if and only if it is flip-flat. 
\end{restatable}

Notably, our proof is algorithmic and yields polynomial bounds in the following sense. 

\begin{restatable}{theorem}{thmfuqwforward}\label{thm:fuqw_forward}
  Every monadically stable class $\CC$ of graphs is flip-flat, where for every~$r$, the function $N_r$ is polynomial. 
  Moreover, given an $n$-vertex graph $G\in\CC$, $A\subseteq V(G)$, and $r\in\NN$,
  we can compute a subset $B \subseteq A$ and a set of flips~$F$ that makes $B$ distance-$r$ independent in $G\oplus F$ in time $\Oof(f_\CC(r) \cdot n^3)$ for some function~$f_\CC$. 
\end{restatable}



Just as uniform quasi-wideness provided a key tool for the algorithmic treatment of nowhere dense graph classes, in particular for first-order model checking~\cite{grohe2017deciding}, we believe that the characterization by flip-flatness will provide an important step towards the algorithmic treatment of monadically stable classes. 
We leave as an open question whether a similar characterization of monadically NIP classes exists.

\paragraph*{Indiscernible sequences.}
Our study is based on \emph{indiscernible
  sequences}, which are a fundamental tool in model
theory. 
An indiscernible sequence is
a (finite or infinite) sequence $ I=(\bar a_1,\bar a_2,\ldots)$ of tuples of equal
length of elements of a fixed (finite or infinite) structure, such that 
any two finite subsequences of $I$
that have equal length, satisfy the same
formulas.
More generally, for a
set of formulas $\Delta$, the 
sequence $I$ is \emph{$\Delta$-indiscernible} if for each
formula $\phi(\bar x_1,\ldots, \bar x_k)$ from $\Delta$ either all subsequences of $I$
of length~$k$ satisfy~$\phi$, or no subsequence of length $k$
satisfies~$\phi$.
For example,
any enumeration of vertices forming a clique or an independent set in a graph is $\Delta$-indiscernible for $\Delta = \{E(x_1,x_2)\}$ containing only the edge relation.
Also, any increasing sequence of rationals in the structure $(\mathbb{Q},<)$ is $\Delta$-indiscernible 
for $\Delta$ being the set of all formulas.
We use indiscernible sequences to obtain new insights about monadically stable and monadically NIP classes.

It was shown by Blumensath~\cite{blumensath2011}
that in monadically NIP classes, 
any fixed element interacts with the tuples of an indiscernible sequence in a very regular way.
We give a new finitary proof of Blumensath's result. Building on this, we develop our main technical tool, \cref{thm:disjoint_families_nip}, where 
we show that the regular properties of the indiscernible sequences extend to their disjoint definable neighborhoods (see \cref{sec:families} for details).
A result similar to \Cref{thm:disjoint_families_nip}, also for disjoint definable neighborhoods in monadically NIP classes, plays a key role in~\cite{bonnet2022twin,OGBT}.

Apart from powering our algorithmic proof of flip-flatness, \cref{thm:disjoint_families_nip} has already found further applications in monadically stable classes of graphs: 
it was recently used to obtain improved bounds for Ramsey numbers \cite{mstable-ramsey} and to prove an algorithmic game-characterization of these classes, called \emph{Flipper game} \cite{flipper-game}.
The paper \cite{flipper-game} provides two proofs for this game characterization. 
One is constructive and algorithmic and builds on our work.
The other builds on model theoretic tools;
it is non-constructive but self-contained and highlights additional properties of monadically stable graph classes, including a second (though non-constructive) proof of \cref{thm:fuqw}.
The algorithmic version of the Flipper game plays a crucial role for proving that the first-order model checking problem is fixed parameter tractable for structurally nowhere dense classes of graphs~\cite{ssmc}. 
This suggests that our developed techniques may be an important tool towards resolving \cref{conj:mNIP}.

\section{Preliminaries}\label{sec:prelims}

We use standard notation from graph theory and model theory
and refer to~\cite{Diestel} and~\cite{Hodges} for extensive background. 
We write $[m]$ for the set of integers $\{1,\ldots,m\}$.
For a function $f : \NN \rightarrow \NN$ we write~$f^k(n)$ for the $k$ times iterated application of the function $f$ to the input~$n$.

\paragraph*{Relational structures and graphs.}
A \emph{(relational) signature} $\Sigma$ is a set of relation symbols, each 
with an associated non-negative integer, called its \emph{arity}.
A \emph{$\Sigma$-structure} $\struc A$ consists of a \emph{universe}, 
which is a non-empty, possibly infinite set, and \emph{interpretations} of the symbols from the
signature: each relation symbol of arity $k$ is interpreted as
a $k$-ary relation over the universe. 
By a slight abuse of notation, we do not differentiate between structures and their universes and between relation symbols and their interpretations. 
By $\CC$ we denote classes of structures over a fixed signature. Unless indicated otherwise,
$\CC$ may contain finite and infinite structures.

A \emph{monadic extension} $\Sigma^+$ of a signature $\Sigma$
is any extension of $\Sigma$ by unary predicates. A unary predicate will
also be called a \emph{color}. 
A \emph{monadic expansion} or \emph{coloring} of a \mbox{$\Sigma$-structure~$\struc A$} is a $\Sigma^+$-structure~$\struc A^+$, where $\Sigma^+$
is a monadic extension of $\Sigma$, such 
that $\struc A$ is the \emph{$\Sigma$-reduct} of $\struc A^+$, that is, 
the $\Sigma$-structure obtained from $\struc A^+$ by
removing all relations with symbols in~$\Sigma^+\setminus \Sigma$.
When $\Cc$ is a class of $\Sigma$-structures and $\Sigma^+$ is a 
monadic extension of $\Sigma$, we 
write~$\Cc[\Sigma^+]$ for the class
of all possible $\Sigma^+$-expansions of structures from $\Cc$. 

A \emph{graph} is a finite structure over the signature consisting of a binary
relation symbol~$E$, interpreted as the symmetric and irreflexive edge relation.

\paragraph*{First-order logic.}

We say that
two tuples $\bar a, \bar b$ of elements are \emph{$\phi$-connected}
in a structure $\struc A$ if $\struc A\models\phi(\bar a, \bar b)$. We call
the set $N_\phi^{\struc A}(\bar a)=\{\bar b \in \struc A^{|\bar y|}~:~\struc A\models\phi(\bar a, \bar b)\}$ the \emph{$\phi$-neighborhood} of $\bar a$. 
We simply write~$N_\phi(\bar a)$ when $\struc A$ is clear from the context. 

Let $\Phi(\bar x)$ be a finite set of formulas with free variables $\bar x$.
A \emph{$\Phi$-type} is 
a conjunction $\tau(\bar x)$ of formulas in $\Phi$
or their negations, such that every formula in $\Phi$
occurs in $\tau$ either positively or negatively.
More precisely, $\tau(\bar x)$ is a formula of the following form, for some subset $A\subseteq\Phi$:
\[\bigwedge_{\phi \in A} \phi(\bar x)\wedge  \bigwedge_{\phi \in \Phi\setminus A} \hspace{-2mm}\neg\phi(\bar x).
\]

Note that for every  $|\bar x|$-tuple $\bar a$ of elements of $\struc A$,
there is exactly one $\Phi$-type $\tau(\bar x)$ such that $\struc A\models \tau(\bar a)$.

\paragraph*{Stability and NIP.}
We refer to the textbooks \cite{baldwin2017fundamentals,pillay2008introduction,shelah1990classification, tent2012course}
for extensive background on classical stability theory. 
While we already defined monadic stability and NIP in terms of transductions in the introduction, let us also give the equivalent original definition.
A formula $\phi(\bar x,\bar y)$ has the \emph{$k$-order property}
on a class $\Cc$ of structures if there are $\struc A\in \Cc$ and 
two sequences $(\bar{a}_i)_{1\leq i\leq k}$, $(\bar{b}_i)_{1\leq i\leq k}$ 
of tuples of elements of $\struc A$, such that
for all $i,j\in[k]$
  \[\struc A \models \phi(\bar{a}_i,\bar{b}_j) \quad \Longleftrightarrow\quad  i \leq j.\]
  
The formula $\phi$ is said to have the \emph{order property} on $\Cc$ 
if it has the $k$-order property for all $k\in \mathbb{N}$.
The class $\Cc$
is called \emph{stable} if no formula has the order-property on~$\Cc$. 
A class $\Cc$ of $\Sigma$-structures is \emph{monadically stable}
if for every monadic extension $\Sigma^+$ of $\Sigma$ the class 
$\Cc[\Sigma^+]$ is stable. 

Similarly,
a formula $\phi(\bar x,\bar y)$ has the \emph{$k$-independence property}
on a class $\Cc$ if there are $\struc A \in \CC$, a size $k$ set $A \subseteq \struc A^{|\bar x|}$ and a sequence $(\bar b_J)_{J\subseteq A}$ of tuples of elements of 
$\struc A$ such that for all $J\subseteq A$ and for all $\bar a\in A$
\[\struc A \models \phi(\bar a, \bar b_J) \quad \Longleftrightarrow \quad 
\bar a\in J.\]

We then say that $A$ is \emph{shattered} by $\phi$. 
We define the \emph{independence property} (IP), classes with the \emph{non-independence property} (NIP classes), and \emph{monadically NIP} classes as expected.
Note that every (monadically) stable class
is (monadically) NIP. 
Baldwin and Shelah proved that in the definitions of monadic stability and monadic NIP, one can alternatively rely on formulas $\phi(x,y)$ with just a pair of singleton variables, instead of a pair of tuples of variables.

\begin{lemma}[{\cite[Lemma~8.1.3, Theorem~8.1.8]{baldwin1985second}}]\label{lem:mon-stab1}
  A class $\Cc$ of $\Sigma$-structures 
  is monadically stable (monadically NIP) if and only if for every monadic
  extension $\Sigma^+$ of $\Sigma$ every $\Sigma^+$-formula~$\phi(x,y)$ 
  is stable (NIP) over~$\Cc[\Sigma^+]$. 
\end{lemma}

We call a relation $R$ \emph{definable} on a structure $\struc A$  if
$R = \{\bar a \in \struc A^{|\bar x|} : \struc A\models \phi(\bar a)\}$
for some formula~$\phi(\bar x)$.
The following is immediate from the previous definitions.

\begin{lemma}\label{lem:definable-expansions}
Let $\Cc$ be a monadically stable (monadically NIP) class of 
$\Sigma$-structures, let~$\Sigma^+$ be a monadic extension of 
$\Sigma$ and let $\Dd$ be the hereditary closure of 
any expansion of $\Cc[\Sigma^+]$ by definable relations. Then also $\Dd$
is monadically stable (monadically NIP). 
\end{lemma}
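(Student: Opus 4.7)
The plan is to reduce monadic NIP (and monadic stability) of $\Dd$ to that of $\Cc$ via a two-step syntactic translation: hereditary closure is undone by introducing a fresh unary predicate marking the induced substructure and relativizing quantifiers to it, while expansion by definable relations is undone by substituting the defining formulas for the new relation symbols inside any test formula.

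Fix the setup: $\Dd$ is the hereditary closure of a class $\Cc^*$ obtained from $\Cc[\Sigma^+]$ by adjoining relation symbols $R_1,\ldots,R_k$, each interpreted in any $\struc A^+ \in \Cc[\Sigma^+]$ by a fixed $\Sigma^+$-formula $\psi_i(\bar x_i)$. To show $\Dd$ is monadically NIP, we take an arbitrary monadic extension of its signature by fresh unary predicates $V_1,\ldots,V_m$ and an arbitrary formula $\phi(\bar x,\bar y)$ over this extended signature. A structure $\struc B$ in the resulting monadic expansion of $\Dd$ is, by construction, obtained from some $\struc A^+ \in \Cc[\Sigma^+]$ by expanding by $R_1,\ldots,R_k$, restricting to a subset $B \subseteq \struc A^+$, and interpreting the $V_j$'s on $B$. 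We reinterpret $\struc B$ as the monadic expansion $\widetilde{\struc A}$ of $\struc A^+$ obtained by adding a fresh unary predicate $U$ interpreted as $B$ and interpreting each $V_j$ as in $\struc B$ on $U$ and empty outside $U$. Then $\widetilde{\struc A}$ lies in $\Cc[\Sigma^+ \cup \{U,V_1,\ldots,V_m\}]$, which is NIP since $\Cc$ is monadically NIP.

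Next, we translate $\phi$ into a formula $\widetilde\phi(\bar x,\bar y)$ over $\Sigma^+ \cup \{U,V_1,\ldots,V_m\}$ by substituting each atom $R_i(\bar t)$ with $\psi_i(\bar t)$ and relativizing every quantifier to $U$ (replacing $\exists z\,\chi$ by $\exists z\,(U(z) \wedge \chi)$, and dually for $\forall$). A straightforward induction on $\phi$ yields $\struc B \models \phi(\bar a,\bar b) \iff \widetilde{\struc A} \models \widetilde\phi(\bar a,\bar b)$ for all tuples $\bar a,\bar b$ of elements of $B$. Consequently, any set shattered by $\phi$ in $\struc B$ is shattered by $\widetilde\phi$ in $\widetilde{\struc A}$ via the same witnesses---all of which automatically lie in $U$---so NIP of $\widetilde\phi$ over $\Cc[\Sigma^+ \cup \{U,V_1,\ldots,V_m\}]$ transfers to NIP of $\phi$ over the given monadic expansion of $\Dd$. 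The stable case is literally the same argument with \emph{shatters} replaced by \emph{has the $k$-order property}, since witnesses of the order property for $\phi$ on $\struc B$ again lie in $B$. The main technical point is verifying the semantic equivalence of $\phi$ and $\widetilde\phi$ on tuples drawn from $B$; everything else is signature bookkeeping.
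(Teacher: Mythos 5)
The paper states \cref{lem:definable-expansions} without proof, treating it as a standard fact, so there is no written argument in the paper to compare against. Your proof is correct and is essentially the expected argument: undo the hereditary closure with a fresh unary predicate $U$ and relativization, and undo the expansion by definable relations via substitution of the defining formulas, then appeal to monadic stability/NIP of~$\Cc$ in the enlarged monadic signature. One point worth making precise, because your phrasing is ambiguous: the quantifiers occurring \emph{inside} the defining formulas $\psi_i$ must \emph{not} be relativized to $U$. Since the relations $R_i$ in a structure of $\Cc^*$ are computed in the ambient structure $\struc A^+$ before passing to the induced substructure $B$, one has $R_i^{\struc B}=\{\bar c\in B^{|\bar x_i|}:\struc A^+\models\psi_i(\bar c)\}$, so the correct translation at the atomic level is $R_i(\bar t)\mapsto\psi_i(\bar t)$ unrelativized, with $U$-relativization applied only at the quantifier steps of $\phi$. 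Your appeal to ``induction on $\phi$'' makes it clear that this is what you intend (the translation is defined by structural recursion on $\phi$, so $\psi_i$ is never recursed into), and with that reading the induction step for the semantic equivalence goes through and the rest of the argument is unproblematic.
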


A formula $\phi(x,\bar y)$ has \emph{pairing index} $k$ on a class $\Cc$ of
structures if there is $\struc A\in \Cc$ and two sequences 
$(a_{ij})_{1\leq i<j\leq k}$ and $(\bar b_i)_{1\leq i\leq k}$ such that for
all $1\leq i<j\leq k$ and $\ell\in [k]$
\[\struc A \models \phi(a_{ij}, \bar b_{\ell})\quad \Longleftrightarrow\quad \ell \in \{i,j\}.\]

We require that $x$ is a single free variable, while $\bar y$ is allowed
to be a tuple of variables. 
Intuitively, 
from a graph theoretic perspective, if a formula has unbounded pairing index on a class $\CC$, then it can encode
arbitrarily large $1$-subdivided cliques in $\CC$, where the principal vertices are represented by the tuples $\bar b_\ell$ and the subdivision vertices are represented by single elements $a_{ij}$.
As discussed in the introduction, this is sufficient to encode arbitrary graphs in~$\CC$ by using an additional color predicate. 
In this case, $\CC$ cannot be monadically NIP. 
The above reasoning is formalized in the following characterization. 

\begin{restatable}{lemma}{pairingIndex}\label{cor:pairing-index}
A class $\Cc$ of $\Sigma$-structures is monadically NIP if and only if 
for every monadic extension~$\Sigma^+$ of $\Sigma$ every 
$\Sigma^+$-formula~$\phi(x,\bar y)$ has bounded pairing index
on $\Cc[\Sigma^+]$.
\end{restatable}
\begin{proof}
If a formula $\phi(x,\bar y)$ has unbounded pairing index on $\CC[\Sigma^+]$,
then the formula $\psi(\bar x, \bar y) := \exists z. R(z) \land \phi(z,\bar x) \land \phi(z,\bar y)$
can shatter arbitrary large sets over the class of monadic expansions of $\CC[\Sigma^+]$ with one extra unary predicate $R$, making this class monadically independent.
By \Cref{lem:definable-expansions}, $\CC$ is then also monadically independent.

If a class $\CC$ is monadically independent,
then by \Cref{lem:mon-stab1}, some formula $\phi(x,y)$ can shatter arbitrary large sets over $\CC[\Sigma^+]$ for some monadic extension $\Sigma^+$.
For every $k$ we can find a set $\{b_1,\dots,b_k\} \subseteq \struc A \in \CC[\Sigma^+]$ shattered by $\phi(x,y)$ via elements $(a_J)_{J\subseteq [k]}$.
The sequences $(a_{\{b_i,b_j\}})_{1\leq i<j\leq k}$ and $(b_i)_{1\leq i\leq k}$
witness that $\phi(x,y)$ has pairing index at least $k$ on $\CC[\Sigma^+]$.
\end{proof}


\paragraph*{Indiscernible sequences.}
Let $\struc A$ be a $\Sigma$-structure and $\phi(\bar x_1,\ldots, \bar x_m)$ a formula. 
We say a sequence $(\bar a_i)_{1\leq i\leq n}$ of tuples from $\struc A$ (where all $\bar x_i$ and $\bar a_j$ have the same length) 
is a \mbox{\emph{$\phi$-indis\-cernible sequence}} of 
length $n$, if for every two sequences of indices
  $i_1 < \dots < i_m$ and
  $j_1 < \dots < j_m$ from $[n]$ we have 
  \[
  \struc A \models 
  \phi(\bar a_{i_1}, \ldots, \bar a_{i_m}) 
  \quad \Longleftrightarrow \quad
  \struc A \models \phi(\bar a_{j_1}, \ldots, \bar a_{j_m}).
  \]
For a set of formulas $\Delta$ we call a sequence
\emph{$\Delta$-indiscernible} if it is $\phi$-indiscernible for 
all $\phi\in \Delta$. 
For any sequence $I=(\bar a_i)_{1\leq i\leq n}$ in a structure \(\struc A\) we define the 
\emph{Ehrenfeucht-Mostowski type (EM-type)} of~$I$ as the
set of all formulas $\phi$ such that $\struc A\models \phi(\bar a_{i_1}, \ldots, \bar a_{i_m})$
for all $i_1<\ldots<i_m$. A sequence $I$ is $\Delta$-indiscernible if
and only if for all formulas $\phi\in \Delta$ either $\phi$ or $\neg\phi$ belongs to 
$\mathrm{EM}(I)$. 
Note that a subsequence $J$ of a $\phi$-indiscernible sequence $I$
is again a $\phi$-indiscernible sequence and that the
$\mathrm{EM}$-type of any subsequence $J$ of a sequence $I$ is a \emph{superset} of the $\mathrm{EM}$-type of $I$.

%
For finite $\Delta$, by (iteratively applying) Ramsey's theorem we can extract
a $\Delta$-indiscernible sequence from any sequence. In general
structures however, in order to extract a large $\Delta$-indiscernible
sequence we must initially start with an enormously large sequence.  To the best of our knowledge the following theorem goes back to Ehrenfeucht and Mostowski~\cite{ehrenfeucht1956models}. 

\begin{theorem}\label{thm:indiscernible_ramsey}
    Let $\Delta$ be a finite set of formulas.
    There exists a function $f$ such that every sequence of elements of length at least $f(m)$ (in a structure with a signature matching $\Delta$)
    contains a $\Delta$-indiscernible subsequence of length $m$.
\end{theorem}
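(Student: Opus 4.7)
}
The plan is to apply Ramsey's theorem for hypergraphs once per formula in $\Delta$, exploiting the fact already noted in the preceding discussion: $\phi$-indiscernibility is preserved under passing to a subsequence, so indiscernibility achieved for one formula survives the later extractions. Recall that Ramsey's theorem for $k$-uniform hypergraphs guarantees a function $R_k:\NN\to\NN$ such that any $2$-coloring of the $k$-element subsets of $[R_k(n)]$ admits a monochromatic subset of size $n$.

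The central observation is that for a fixed formula $\phi(\bar x_1,\ldots,\bar x_k)\in\Delta$ and a sequence $(\bar a_i)_{i\in[N]}$ of tuples in $\struc A$, the formula $\phi$ induces a natural $2$-coloring $\chi_\phi$ of the $k$-element subsets of $[N]$: set $\chi_\phi(\{i_1<\cdots<i_k\})=1$ if $\struc A\models\phi(\bar a_{i_1},\ldots,\bar a_{i_k})$ and $0$ otherwise. A monochromatic subset $\{i_1<\cdots<i_m\}\subseteq[N]$ of size $m$ under $\chi_\phi$ corresponds exactly to a $\phi$-indiscernible subsequence $(\bar a_{i_j})_{j\in[m]}$ of length $m$ of the original sequence. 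Hence a single application of $R_k$ yields $\phi$-indiscernibility.

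To handle all of $\Delta=\{\phi_1,\ldots,\phi_\ell\}$ with respective arities $k_1,\ldots,k_\ell$ (where arity means the number of tuple-arguments of the formula), I would iterate: set $m_\ell:=m$ and recursively $m_{i-1}:=R_{k_i}(m_i)$ for $i=\ell,\ell-1,\ldots,1$. Given any sequence of length at least $f(m):=m_0$, first extract a $\phi_1$-indiscernible subsequence of length $m_1$ using $\chi_{\phi_1}$, then from this subsequence extract a $\phi_2$-indiscernible one of length $m_2$, and so on. Because $\phi_i$-indiscernibility is inherited by all further subsequences, the final subsequence of length $m_\ell=m$ is simultaneously $\phi_i$-indiscernible for every $i\in[\ell]$, that is, it is $\Delta$-indiscernible.

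I do not expect a real obstacle here; the only care needed is bookkeeping the arities $k_i$ correctly (the Ramsey application is to $k_i$-element subsets, not to $|\bar x_1|+\cdots+|\bar x_{k_i}|$-element ones) and verifying that the iterated extraction preserves indiscernibility for the formulas already treated, which is immediate from the definition of $\phi$-indiscernibility and the fact, already highlighted in the excerpt, that $\mathrm{EM}$-types only grow when passing to subsequences.
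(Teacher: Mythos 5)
Your proof is correct and follows exactly the route the paper itself indicates: the paper does not give a proof but states just before the theorem that ``by (iteratively applying) Ramsey's theorem we can extract a $\Delta$-indiscernible sequence from any sequence,'' which is precisely your argument of coloring $k$-element index sets by the truth value of each $\phi\in\Delta$ and iterating over the finitely many formulas. The bookkeeping you flag (Ramsey uniformity $k$ is the number of tuple-arguments of $\phi$, and indiscernibility for already-handled formulas is inherited by subsequences) is handled correctly.
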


In stable classes we can efficiently find polynomially large indiscernible sequences of elements (see also \cite[Theorem 3.5]{regularity_lemmas}). 
The theorem is stated for graphs, but it easily extends to general structures. 

\begin{theorem}[{\cite[Theorem 2.8]{kreutzer2018polynomial}}]\label{thm:extract_indiscernibles}
  Let $\CC$ be a stable class of graphs and let $\Delta$ be a finite
  set of formulas.  There is a polynomial $p(x)$ such that
  for all $G\in \CC$, every positive integer $m$ and every sequence
  $J$ of vertices of $G$ of length $\ell=p(m)$, there
  exists a subsequence $I$ of length~$m$ which is
  $\Delta$-indiscernible.

  Furthermore, there is an algorithm that given an $n$-vertex graph $G\in \CC$
  and a sequence $(v_i)_{1\leq i\leq \ell}$, computes a
  $\Delta$-indiscernible subsequence of length at
  least $m$. The running time of the algorithm is in
  $O(|\Delta| \cdot k \cdot \ell^{k+1} \cdot n^{q}\cdot a(n)\cdot \lambda(\Delta))$, where $k$
  is the maximal number of free variables, $q$ is the maximal quantifier-rank of
  a formula of $\Delta$, $a(n)$ is the time required to test adjacency between two vertices and
  $\lambda(\Delta)$ is the length of a longest formula in $\Delta$.    
\end{theorem}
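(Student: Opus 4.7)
The plan is to build the $\Delta$-indiscernible subsequence iteratively, handling one formula at a time: given the current sequence, first extract a $\phi_1$-indiscernible subsequence, then extract a $\phi_2$-indiscernible subsequence from that, and so on through all formulas in~$\Delta$. Since $\phi$-indiscernibility is preserved under taking subsequences and the composition of $|\Delta|$ polynomial-size extractions is still polynomial in~$m$, it suffices to handle a single formula $\phi \in \Delta$ and then to iterate.

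For a single formula $\phi(\bar x_1, \ldots, \bar x_r)$, I would proceed by induction on the arity~$r$. The case $r = 1$ is immediate, since any sequence is already $\phi$-indiscernible. For the inductive step the key ingredient is the classical Shelah polynomial bound for stable formulas: if a formula $\psi(\bar x, \bar y)$ does not have the $k$-order property on~$\CC$, then for every graph $G \in \CC$ and every finite $A \subseteq V(G)$ the number of distinct $\psi$-types realized over~$A$ is bounded by~$|A|^{O(k)}$. This polynomial bound on the type space is what replaces the exponential Ramsey-type blow-up inherent in \cref{thm:indiscernible_ramsey}.

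Equipped with this bound, the single-formula extraction proceeds as follows. Split the input sequence $J$ into a prefix $J_1$ and a suffix $J_2$ of comparable size. For each $(r-1)$-tuple $\bar a$ drawn from $J_1$, the tuple $\bar a$ realizes one of only polynomially many $\phi(\bar x_1, \ldots, \bar x_{r-1}, \bar y)$-types over~$J_2$. By pigeonhole over this polynomial number of types, one finds a large subset of $J_1$ on which all $(r-1)$-tuples share a common $\phi$-type over a large subset of $J_2$. Iterating once for each choice of ``pivot variable'' among $\bar x_1,\ldots,\bar x_r$ and then recursively applying the induction hypothesis on both the prefix and the suffix to remove the dependence on the pivot yields a $\phi$-indiscernible subsequence of size polynomial in~$m$.

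The main obstacle is controlling the growth of the polynomial through the recursion: a naive split-and-recurse would double the required length at each of a constant number of levels, which by itself is fine, but one must verify that the final polynomial degree stays bounded by a function of $r$ and the stability rank of $\phi$ on $\CC$ rather than also depending on~$m$. This is exactly what the stability bound guarantees, since it caps the branching factor at each pigeonhole step at a polynomial in the current sequence length. The algorithmic statement then follows because every step---computing $\phi$-types, pigeonholing, and extracting subsequences---can be implemented in polynomial time, with each formula evaluation contributing the factor of $n^q \cdot a(n) \cdot \lambda(\Delta)$ claimed in the stated runtime.
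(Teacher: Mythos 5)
This statement is imported verbatim from \cite{kreutzer2018polynomial} (Theorem 2.8); the paper gives no proof of its own, so I can only assess your argument on its merits. Your overall strategy---reduce to a single formula, then exploit Shelah's polynomial bound on the number of $\phi$-types over a finite set for stable $\phi$ to replace the exponential Ramsey blow-up---is the right starting point, and the reduction to one formula at a time via composition of polynomials is routine and correct.

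The genuine gap is in the pigeonhole step for arity $r > 2$. You claim that pigeonholing over the polynomially many $\phi$-types realized by $(r-1)$-tuples from $J_1$ over $J_2$ yields a large subset $J_1' \subseteq J_1$ on which \emph{all} $(r-1)$-tuples from $J_1'$ share a common type. That is not a pigeonhole conclusion: pigeonhole gives a single type realized by \emph{many} tuples, not a subset of elements all of whose $(r-1)$-tuples realize the same type. The latter is a hypergraph Ramsey statement (colour $(r-1)$-subsets of $J_1$ by type and ask for a monochromatic set), and the naive hypergraph Ramsey bound is tower-type, not polynomial. To run your scheme you would need a polynomial hypergraph Ramsey theorem for stable relations---which is essentially the very theorem you are trying to prove, making the argument circular as written. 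The closing step ("iterate over pivot variables and recursively apply the induction hypothesis on prefix and suffix") is also too vague to check: after fixing a pivot you still have an $r$-ary formula, only now with one argument specialized to a parameter, and different parameter choices give different formulas, so it is unclear which $(r-1)$-ary formula the inductive hypothesis is applied to or why the result is indiscernible with respect to the original $\phi$.
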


By starting with a sequence $(v_i)_{1\leq i\leq \ell'}$ of length $\ell' = p(m)^{k^2+k}$, we know that the number of vertices~$n$ in the host graph is at least $\ell'$.
We may then truncate the sequence to length $\ell=p(m)$ and execute the algorithm of \Cref{thm:extract_indiscernibles}.
The factor $\ell^{k+1} = \ell'^{1/k}$ in the run time can therefore be bounded by $n^{1/k}$,
while still guaranteeing that the starting length $\ell' = p(m)^{k^2+k}$ is only polynomial in $m$.
Looking at the proof in \cite{kreutzer2018polynomial}, we see that
the time needed to evaluate $G \models \phi(\bar v)$ for formulas $\phi \in \Delta$ and tuples $\bar v$
is bounded by $\Oof(n^{q}\cdot a(n)\cdot \lambda(\Delta))$.
Over the course of this paper, we will construct indiscernibles with respect to formulas
that can be evaluated much faster.
To capture this, we only assume a generic bound $f_{\CC\Delta}(n)$ on the evaluation time for formulas in $\Delta$ on $n$-vertex graphs from~$\CC$.
Applying these two modifications to \Cref{thm:extract_indiscernibles}, we immediately get the following statement.

\begin{theorem}\label{thm:poly_seqs}
  Let $\CC$ be a stable class of graphs and let $\Delta$ be a finite set of formulas.
  Assume that we can evaluate, given $\phi \in \Delta$, an $n$-vertex graph 
  $G \in \CC$  and a tuple~$\bar v$ in $G$, whether $G \models \phi(\bar v)$ in time  
  $f_{\CC\Delta}(n)$.

  Then there is an integer $t$ and an algorithm that 
  finds in any sequence of at least $m^t$ vertices in an $n$-vertex graph $G \in\CC$ 
  a $\Delta$-indiscernible subsequence of length $m$.
  The running time of the algorithm is 
  $\Oof(|\Delta| \cdot k \cdot n^{1/k} \cdot f_{\CC\Delta}(n))$, where $k$
  is the maximal number of free variables of formulas in $\Delta$.
\end{theorem}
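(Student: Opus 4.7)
The plan is to derive \cref{thm:poly_seqs} directly from \cref{thm:extract_indiscernibles}, making precise the two modifications already outlined in the paragraph preceding the statement. Let $p$ be the polynomial from \cref{thm:extract_indiscernibles}, and choose an integer $t$ large enough so that $m^t \geq p(m)^{k^2+k}$ for all $m$; since $p$ is a fixed polynomial, such a $t$ exists.

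Given an input sequence of length at least $\ell' = m^t$ in an $n$-vertex graph $G \in \CC$, I would first observe that the host graph must contain at least this many vertices, so $n \geq \ell' \geq p(m)^{k^2+k}$. I would then truncate the input sequence to its first $\ell = p(m)$ elements and run the algorithm of \cref{thm:extract_indiscernibles} on this shorter sequence. Since $\ell = p(m)$ matches exactly the hypothesis of that theorem, the output is guaranteed to be a $\Delta$-indiscernible subsequence of length at least $m$, which is the correctness claim we need.

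For the running time, I would start from the bound $\Oof(|\Delta| \cdot k \cdot \ell^{k+1} \cdot n^q \cdot a(n) \cdot \lambda(\Delta))$ given by \cref{thm:extract_indiscernibles} and perform two substitutions. The factor $\ell^{k+1} = p(m)^{k+1}$ is at most $n^{1/k}$, because $n \geq p(m)^{k^2+k}$ gives $n^{1/k} \geq p(m)^{(k^2+k)/k} = p(m)^{k+1}$. The remaining factor $n^q \cdot a(n) \cdot \lambda(\Delta)$ is, as noted in the preceding discussion citing the proof of \cref{thm:extract_indiscernibles}, an upper bound on the cost of a single evaluation $G \models \phi(\bar v)$ for $\phi \in \Delta$ on an $n$-vertex graph; by hypothesis this cost is at most $f_{\CC\Delta}(n)$. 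Combining these substitutions gives the stated bound $\Oof(|\Delta| \cdot k \cdot n^{1/k} \cdot f_{\CC\Delta}(n))$.

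There is no real obstacle here: the argument is essentially a bookkeeping exercise that formalizes the observations made just before the theorem statement. The only subtlety is making sure $t$ is chosen uniformly (independently of $m$ and $n$), which is immediate from $p$ being a polynomial, so that the blow-up from $p(m)^{k^2+k}$ is absorbed into a single polynomial factor $m^t$ in the required initial sequence length.
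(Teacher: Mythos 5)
Your proof is correct and matches the paper's own argument exactly: truncate the input sequence to length $\ell = p(m)$, run the algorithm of \cref{thm:extract_indiscernibles}, bound $\ell^{k+1} = p(m)^{k+1} \leq n^{1/k}$ using $n \geq \ell' \geq p(m)^{k^2+k}$, and replace the per-evaluation cost $n^q \cdot a(n) \cdot \lambda(\Delta)$ by the generic bound $f_{\CC\Delta}(n)$. The extra remark about choosing $t$ uniformly is a sensible clarification but does not constitute a different approach.
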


\section{Indiscernibles in monadically NIP and monadically stable classes}\label{apx:indiscernibles}

In classical model theory, instead of classes of finite structures, usually infinite structures are studied.
For a single infinite structure $\struc A$, we say 
$\struc{A}$ is (monadically) stable/NIP, if the class $\{\struc{A}\}$ is (monadically) stable/NIP.
In this context, an indiscernible sequence is usually assumed to be of infinite length and 
indiscernible over the set of all first-order formulas.
Using $\Omega$ to denote the set of all first-order formulas, we will denote the latter property as \emph{$\Omega$-indiscernibility}.

It is well-known that indiscernible sequences can be used to characterize stability and NIP for infinite structures.
To characterize NIP, define the \emph{alternation rank} of a formula~$\phi(\bar x,\bar y)$ over a sequence $I$ in a structure $\struc A$ as the maximum~$k$ such that there
exists a (possibly non-contiguous) subsequence $(\bar a_1,\ldots,\bar a_{k+1})$ of $I$ and a tuple $\bar b \in \struc A^{|\bar x|}$ such that for all $i \in [k]$ we have $\struc A \models \phi(\bar b,\bar a_{i}) \iff \struc A \models \neg\phi(\bar b,\bar a_{i+1})$.
A structure $\struc A$ is NIP if and only if every formula has finite alternation rank over every $\Omega$-indiscernible sequence in~$\struc A$ \mbox{\cite[Theorem 12.17]{Poizat}}.
Stable classes can be characterized in a similar way. 
We say the \emph{exception rank} of a formula~$\phi(\bar x,\bar y)$ over a sequence $I$ in a structure $\struc A$ is the maximum~$k$ such that
there exists a tuple $\bar b \in \struc A^{|\bar x|}$ 
such that for $k$ distinct tuples $\bar a_i \in I$ we have $\struc A\models\phi(\bar b, \bar a_i)$
and for $k$ other distinct tuples $\bar a_i \in I$ we have $\struc A\models\neg\phi(\bar b, \bar a_i)$.
A structure $\struc A$ is stable if and only if every formula has finite exception rank over every $\Omega$-indiscernible sequence in $\struc A$ \mbox{\cite[Corollary 12.24]{Poizat}}.

Hence, NIP and stablility can be characterized by the interaction of tuples of elements with indiscernible sequences. 
For \emph{monadically} NIP structures, Blumensath \cite{blumensath2011} shows that the interaction of single elements with indiscernible sequences is even more restricted.

\begin{restatable}[{\cite[{Corollary 4.13}]{blumensath2011}}]{theorem}{thmblumensath}\label{thm:blumensath}
  In every monadically NIP structure $\struc A$, for every formula~$\phi(x,\bar y)$, where $x$ is a single free variable, 
  and $\Omega$-indiscernible sequence $(\bar a_i)_{i \in \N}$ of $|\bar y|$-tuples the following holds:
  for every element $b\in \struc A$
  there exists an exceptional index $\mathrm{ex}(b) \in \N$ and 
  two truth values $t_<(b),t_>(b) \in \{0,1\}$ such that
    \begin{align*}
      &\text{for all } i < \mathrm{ex(b)}: 
      \struc A \models \phi(b,\bar a_i) \hspace{2mm} \iff \hspace{2mm} t_<(b) = 1, \text{ and } \\
      &\text{for all } i > \mathrm{ex(b)}: 
      \struc A \models \phi(b,\bar a_i)\hspace{2mm} \iff \hspace{2mm} t_>(b)=1.
    \end{align*}
\end{restatable}

See \cref{fig:blumensath-appendix} for examples. In particular, the alternation rank of the formulas $\phi(x,\bar y)$ with a single free variable $x$ over every $\Omega$-indiscernible sequence in a monadically NIP structure is at most $2$.

\begin{figure}[H]
  \begin{center}
  \includegraphics[scale=1]{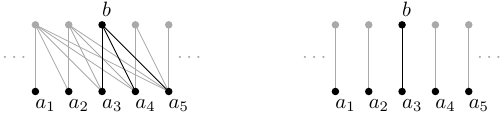}
  \end{center}
  \caption{Two monadically NIP structures. On the left: the infinite half-graph, where \cref{thm:blumensath} applies for the edge relation with $t_<(b) = 0$ and $t_>(b) = 1$.
  On the right: the infinite matching, where
  we have $t_<(b)  = t_>(b) = 0$ but a differing truth value at index $\mathrm{ex}(b) = 3$.
  }
  \label{fig:blumensath-appendix}
\end{figure}

\cref{thm:blumensath} will be a crucial tool for proving flip-flatness. 
However, as we strive for algorithmic applications, we have to develop an effective, computable version that is suitable for handling classes of finite structures.
Instead of requiring $\Omega$-indiscernibility, we will specifically consider $\Delta$-indiscernible sequences with respect to special sets $\Delta=\Delta^\Phi_k$ that we define soon.
These sets $\Delta^\Phi_k$ will strike the right balance of being
on the one hand sufficiently rich to allow us to derive structure from them, but on the other hand
sufficiently simple such that we can efficiently evaluate formulas from $\Delta^\Phi_k$, making our flip-flatness result algorithmic.

Fix a finite set $\Phi(x,\bar{y})$ of formulas of the form $\phi(x,\bar y)$
where $x$ is a single variable.
A \emph{$\Phi$-pattern} is a finite sequence $(\phi_i)_{1\le i\le k'}$,
where each formula $\phi_i(x,\bar y)$
is a boolean combination of formulas $\phi(x,\bar y)\in \Phi(x,\bar y)$.
 Given a $\Phi$-pattern $(\phi_i)_{i\in [k']}$,
the following formula expresses that, 
for a given sequence 
of $k'$ tuples, each of length $|\bar y|$, there is some element that realizes that pattern (see \cref{fig:pattern-appendix} for an example):
\[\gamma_{(\phi_1,\ldots,\phi_{k'})}(\bar y_1,\ldots,\bar y_{k'})
=\exists x.\bigwedge_{i\in [k']}\phi_i(x,\bar y_i).\]

For a finite set of formulas $\Phi(x,\bar{y})$ and an integer $k$
we define $\Delta_k^{\Phi}$ to be the set of 
all formulas $\gamma_{(\phi_1,\ldots,\phi_{k'})}$,
where $(\phi_1,\ldots,\phi_{k'})$ is a $\Phi$-pattern of length $k' \leq k$. Note that the set~$\Delta_k^{\Phi}$ is finite, as (up to equivalence) there are only finitely many boolean combinations of formulas in $\Phi$.
We write $\Delta_k^{\phi}$ for $\Delta_k^{\{\phi\}}$.

\begin{figure}[H]
  \begin{center}
  \includegraphics[scale=1]{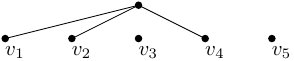}
  \end{center}
  \caption{Example of an $\{E\}$-pattern: a graph satisfying $G\models \gamma_{(E,E,\neg E,E,\neg E)}(v_1,v_2,v_3,v_4,v_5)$.}
  \label{fig:pattern-appendix}
\end{figure}

We can now state a finitary version of \cref{thm:blumensath} that we prove in this section.
For the convenient use later~on, we state the result not for single formulas \(\phi\) but for \(\Phi\)-types.

\begin{restatable}{theorem}{iseqNMonNipMultipleFormulas}\label{thm:iseq_n_mon_nip_multiple_formulas}
  For every monadically NIP class $\CC$ of structures and finite set of formulas $\Phi(x,\bar y)$ 
  there exist integers $n_0$ and $k$, such that for every 
  $\struc A \in \mathscr{C}$ the following holds. 
  If $I = (\bar a_i)_{1\leq i\leq n}$ is a $\Delta^\Phi_k$-indiscernible sequence of length $n\geq n_0$ in $\struc A$,
  and $b \in \struc A$, 
  then there exists an exceptional index $\mathrm{ex}(b) \in [n]$ and 
  $\Phi$-types $\tau_-(x,\bar y)$ and $\tau_+(x,\bar y)$
  such that
  \begin{alignat*}{2}
      \struc A\models \tau_-(b,\bar a_i) \text{ holds for all }& 1\le i < \mathrm{ex(b)}, \text{ and }
      \\
      \struc A\models \tau_+(b,\bar a_i) \text{ holds for all }& \mathrm{ex}(b)<i\le n.
  \end{alignat*}
\end{restatable}

Our proof uses different tools than~\cite{blumensath2011}. It is combinatorial, fully constructive, and gives explicit bounds on $n_0$ and $k$ as well as the required set of formulas $\Delta^\Phi_k$.
One may alternatively finitize \cref{thm:blumensath} via compactness,
but then we do not obtain these crucial properties.

For the more restricted monadically stable classes we can give even stronger guarantees: for every element $b\in\struc A$, the types do not alternate and we have $\tau_- = \tau_+$.

\begin{restatable}{theorem}{iseqNMonStableMultipleFormulas}
  \label{cor:iseq_n_mon_stable_multiple_formulas}
  For every monadically stable class $\CC$ of structures and finite set of formulas~$\Phi(x,\bar y)$
  there exist integers $n_0$ and $k$, such that for every
  $\struc A \in \mathscr{C}$ the following holds.
  If $I = (\bar a_i)_{1\leq i\leq n}$ is a $\Delta^\Phi_k$-indiscernible sequence of length $n\geq n_0$ in $\struc A$,
  and $b \in \struc A$,
  then there exists an exceptional index $\mathrm{ex}(b) \in [n]$ and a $\Phi$-type $\tau$, such that
  \[\struc A\models \tau(b,\bar a_i)
  \qquad\text{for all $i\in [n]$ with
  $i\neq\mathrm{ex}(b)$.}
  \]
\end{restatable}


\subsection{Alternation rank in monadically NIP classes}\label{apx:alternation-rank}

%
%
%

We start by showing that in monadically NIP classes, for every formula $\phi(x,\bar y)$ there exists~$k$ such that the alternation rank of $\phi$ over a sufficiently long $\Delta^\phi_k$-indiscernible sequence is bounded by $2$.

\begin{lemma}\label{lem:alternation_rank}
  For every monadically NIP class $\CC$ of $\Sigma$-structures and formula $\phi(x,\bar y)$
  there exist integers~$n_0$ and $k$, such that for every 
  $\struc A \in \mathscr{C}$ the following holds.
  If $I = (\bar a_i)_{1\leq i\leq n}$ is a $\Delta^\phi_k$-indiscernible sequence of length $n\geq n_0$ in $\struc A$, then $\phi$ has alternation rank at most $2$ over~$I$.
\end{lemma}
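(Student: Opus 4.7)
The plan is to argue by contradiction using \cref{cor:pairing-index}, which characterises monadic NIP via bounded pairing index. Suppose the lemma fails: for every pair $(n_0,k)$ of positive integers there exist a structure $\struc A\in\CC$ and a $\Delta^\phi_k$-indiscernible sequence $I = (\bar a_i)_{1\leq i\leq n}$ in $\struc A$ of length $n\geq n_0$ over which $\phi$ has alternation rank at least $3$. Replacing $\phi$ by $\neg\phi$ if necessary, the realised alternating 4-pattern is $(0,1,0,1)$, so that $\gamma_{(\neg\phi,\phi,\neg\phi,\phi)}\in\mathrm{EM}(I)$ for each such $I$. The goal is to derive from $\phi$ a formula whose pairing index on $\CC$ is unbounded.

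A natural candidate is
\[
\phi^*(x,\bar y_1,\bar y_2)\;:=\;\neg\phi(x,\bar y_1)\wedge\phi(x,\bar y_2).
\]
Fix $m$ and choose $k,n_0$ sufficiently large (in terms of $m$ and the a~priori NIP bound on the alternation rank of $\phi$). With pair tuples $\bar c_\ell:=(\bar a_{2\ell-1},\bar a_{2\ell})$ for $\ell\in[m]$, indiscernibility yields for every $i<j$ in $[m]$ an element $a_{ij}\in\struc A$ realising $(0,1,0,1)$ on $(\bar a_{2i-1},\bar a_{2i},\bar a_{2j-1},\bar a_{2j})$, so that $\phi^*(a_{ij},\bar c_i)=\phi^*(a_{ij},\bar c_j)=1$ is automatic. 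The crux is to ensure $\phi^*(a_{ij},\bar c_\ell)=0$ for every $\ell\in[m]\setminus\{i,j\}$, i.e.\ that the $\phi$-characteristic pattern of $a_{ij}$ on $(\bar a_1,\ldots,\bar a_{2m})$ carries no further ``01-pair''.

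To handle this, setting $k\geq 2m$, I would probe the auxiliary formula
\[
\gamma'_{ij} \;:=\; \exists x\,\Big[\bigwedge_{\ell\in\{i,j\}}\!\!\big(\neg\phi(x,\bar y_{2\ell-1})\wedge\phi(x,\bar y_{2\ell})\big)\;\wedge\;\bigwedge_{\ell\notin\{i,j\}}\!\!\neg\big(\neg\phi(x,\bar y_{2\ell-1})\wedge\phi(x,\bar y_{2\ell})\big)\Big],
\]
which lies in $\Delta^\phi_{2m}$. Indiscernibility produces a dichotomy. If $\gamma'_{ij}\in\mathrm{EM}(I)$ for every pair $(i,j)$, then suitable witnesses $a_{ij}$ without parasitic 01-pairs exist, $\phi^*$ attains pairing index at least $m$, and letting $m\to\infty$ contradicts \cref{cor:pairing-index}. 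Otherwise $\gamma'_{ij}\notin\mathrm{EM}(I)$ for some $(i,j)$; equivalently, every $(0,1,0,1)$-witness on the corresponding 4-tuple is forced to realise at least one additional 01-pair. Since each 01-pair corresponds to a $0\to 1$ transition in the witness's characteristic pattern, counting transitions pushes the alternation rank of $\phi$ over $I$ from $\geq 3$ up to $\geq 5$.

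In the latter branch I would iterate the construction with the richer pattern $(0,1)^3$ and triple tuples $\bar d_\ell:=(\bar a_{3\ell-2},\bar a_{3\ell-1},\bar a_{3\ell})$ in place of pairs, each round either yielding the pairing-index contradiction or bumping the alternation-rank lower bound upwards by $2$. Since monadic NIP implies NIP, the alternation rank of $\phi$ over any sufficiently indiscernible sequence in $\CC$ is a priori bounded by some constant $K_\phi$ depending only on $\phi$ and $\CC$, so the iteration must terminate in the pairing-index branch after at most $O(K_\phi)$ rounds, completing the contradiction. The main technical obstacle is justifying the failure branch of the dichotomy---in particular, converting ``no controlled witness'' into a strict increase of the alternation-rank lower bound---where the full strength of $\Delta^\phi_k$-indiscernibility (with $k$ scaling linearly in $m$), together with the NIP alternation-rank bound, must be deployed.
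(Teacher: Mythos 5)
Your plan is in the right spirit (reduce to the pairing-index bound of \cref{cor:pairing-index}), but there is a concrete gap at the point where you invoke $\Delta^\phi_k$-indiscernibility. The set $\Delta^\phi_k$ consists (by definition) only of formulas of the shape $\exists x.\bigwedge_{p}\psi_p(x,\bar y_p)$, where each $\psi_p(x,\bar y_p)$ is a boolean combination of $\phi(x,\bar y_p)$ \emph{in that single tuple $\bar y_p$}. Your blocking conjunct $\neg\bigl(\neg\phi(x,\bar y_{2\ell-1})\wedge\phi(x,\bar y_{2\ell})\bigr)=\phi(x,\bar y_{2\ell-1})\vee\neg\phi(x,\bar y_{2\ell})$ is a disjunction spanning two distinct $\bar y$-tuples, so $\gamma'_{ij}$ is \emph{not} of the $\gamma_{(\phi_1,\dots,\phi_i)}$ form and does not belong to $\Delta^\phi_{2m}$. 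Consequently $\Delta^\phi_k$-indiscernibility gives you no dichotomy about whether $\gamma'_{ij}$ lies in $\mathrm{EM}(I)$, and the entire case split at the heart of the argument is unsupported. This is exactly the obstacle the paper's proof is designed to sidestep: after getting a witness $b$ for $\gamma_{(\phi,\neg\phi,\phi,\neg\phi)}$ at four widely spaced indices (this formula \emph{is} in $\Delta^\phi_k$), it uses majority arguments to extract long blocks $L$, $M$, $R$ to which $b$ is homogeneously $\phi$-connected; only then does it regroup into pairs and pass to the pairing formula. Crucially, the target formula $\gamma_{ij}$ is never tested directly against indiscernibility --- instead the paper reads off the \emph{actual} realised $\phi$-pattern $S\in\{\phi,\neg\phi\}^{2k'}$ of $b$ on the unpacked subsequence, observes that $\gamma_S$ is a bona fide member of $\Delta^\phi_k$, and then uses that $\gamma_S$ logically implies $\gamma_{ij}$. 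Your proposal has no analogue of this move.

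Even setting that aside, the ``failure branch'' of your dichotomy is not justified. Knowing that $\gamma'_{ij}\notin\mathrm{EM}(I)$ for one choice of $(i,j)$ only tells you that \emph{some} increasing $2m$-tuple of $I$ fails $\gamma'_{ij}$, not that every $(0,1,0,1)$-witness at your particular indices necessarily carries an extra $0\!\to\!1$ transition at one of the designated paired positions $(2\ell-1,2\ell)$; the witness could instead violate the pattern by having the ``wrong'' value at some odd position without creating any new monotone pair where you measure it. The proposed escalation to alternation rank $\geq 5$ is therefore not forced, and the iteration has no guaranteed progress measure. The paper's argument needs no iteration: the majority/regrouping step produces, in one pass, a witness whose $\phi_\oplus$-pattern is homogeneous outside exactly two boundary positions, and the lemma follows from the single pairing-index bound $k'$. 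I suggest redoing the argument along those lines: obtain a concrete witness first, homogenise it with majority arguments, and only then compare against $\Delta^\phi_k$ via the realised pattern.
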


\begin{proof}
As $\Cc$ is monadically NIP, by \cref{cor:pairing-index} there exists a minimal $k'$ such that 
$\phi_\oplus(x,\bar y_1 \bar y_2) := \phi(x, \bar y_1) \oplus \phi(x, \bar y_2)$
has pairing index less than $k'$ in $\CC$, where $\oplus$ denotes the XOR operation. 
Let $k \coloneqq 2k'$ and $n_0\coloneqq 20k'$. 

  Assume towards a contradiction that $\phi$ has alternation rank at least $3$.
  Then without loss of generality there exist indices $1 \leq i_1 < i_2 < i_3 < i_4 \leq n$
  and an element $a\in \struc A$ that is $\phi$-connected to $\bar a_{i_1}, \bar a_{i_3}$
  but not $\bar a_{i_2}, \bar a_{i_4}$.
  The formula
  \[
      \gamma_{(\phi,\neg \phi,\phi,\neg \phi)}(\bar x_1,\bar x_2,\bar x_3,\bar x_4) 
    =
    \exists z.\big(
    \phi(z,\bar x_1) 
    \wedge 
    (\neg \phi)(z,\bar x_2) 
    \wedge 
    \phi(z,\bar x_3) 
    \wedge 
(\neg\phi)(z,\bar x_4)\big)
  \]
  is contained in $\Delta^\phi_k$.
  Since $\gamma_{(\phi,\neg \phi,\phi,\neg \phi)}(\bar a_1,\bar a_2,\bar a_3,\bar a_4)$ holds in $\struc A$, as witnessed by $a$,
  by $\Delta^\phi_k$-indis\-cer\-nibility of $I$ and since $n \ge 20k'$, there must also exist an element $b \in \struc A$
  witnessing the truth~of
  \[\struc A \models 
  \gamma_{(\phi,\neg \phi,\phi,\neg \phi)}(
    \bar a_{j_1 := 4k'},
    \bar a_{j_2 := 8k'},
    \bar a_{j_3 := 12k'},
    \bar a_{j_4 := 16k'}
  ).
  \]
  
  By a simple majority argument there exists a (possibly non-consecutive) subsequence~$M$ of~$I$ of length exactly $2k'$ that is located between $\bar a_{j_2}$ and $\bar a_{j_3}$, and to which $b$ is homogeneously $\phi$-connected (that is, $b$ is $\phi$-connected to either all or no elements of $M$).

  Assume $b$ is $\phi$-connected to no element in $M$. We set $\ell \coloneqq j_1$ and $r \coloneqq j_3$.
  Again by applying majority arguments we find a subsequence $L$ of $I$ of length exactly~$2k'$ before~$\bar a_\ell$ to which~$b$ is homogeneously $\phi$-connected and a subsequence $R$ of $I$ of length exactly~$2k'$ after~$\bar a_r$ to which $b$ is homogeneously $\phi$-connected.
  The possible cases of how $b$ is $\phi$-connected towards the subsequence $(L,\bar a_\ell, M,\bar a_r, R) \subseteq I$
  are depicted in \cref{fig:neighbourhood_nip}.
  
  \begin{figure}[h!]
    \begin{center}
    \includegraphics[scale=0.75]{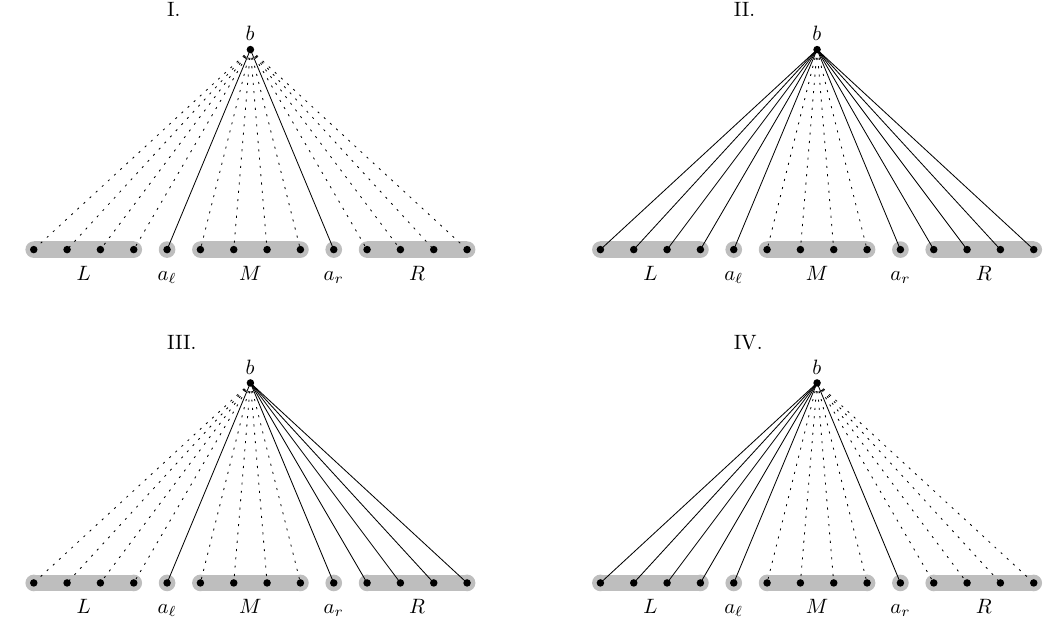}
    \end{center}
    \caption{A case distinction on the connections between $b$ and $L$ and $R$.
    Solid lines denote $\phi$-connections while
    dashed lines denote $\neg\phi$-connections. The cases when $b$ is $\phi$-connected to $M$ are symmetric. }
    \label{fig:neighbourhood_nip}
\end{figure}
  
We regroup the elements in $(L,\bar a_\ell, M,\bar a_r, R)$ 
by merging each tuple with an odd index with respect to $(L,\bar a_\ell, M,\bar a_r, R)$ with its successor,
yielding a new sequence $J = (L',\bar a'_\ell, M', \bar a'_r, R')$
where $L'$ and $R'$ are each of length exactly $k'$, 
$\bar a'_\ell := \bar a_\ell \bar a_{\ell+1}$ and 
$\bar a'_r := \bar a_{r-1} \bar a_{r}$ and $M'$ is of length exactly $k'-1$.
As depicted in \cref{fig:neighbourhood_nip_xor}, $b$ is $\phi_\oplus$-connected only to $\bar a'_\ell$ and $\bar a'_r$ among the tuples in $J$.

\begin{figure}[h!]
  \begin{center}
  \includegraphics[scale=0.75]{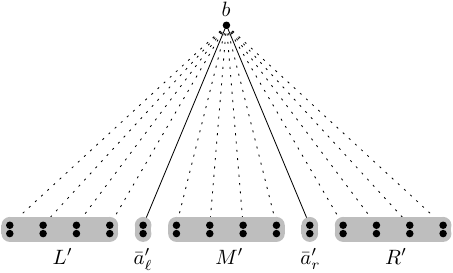}
  \end{center}
  \caption{A visualization of the regrouped sequence $J$.
  Solid lines denote $\phi_\oplus$-connections while
  dashed lines denote $\neg\phi_\oplus$-connections.}
  \label{fig:neighbourhood_nip_xor}
\end{figure}

Therefore, by omitting elements from $L',M',R'$ we find for every pair of distinct indices $i, j \in [k']$ a subsequence $J_{ij} \subseteq J$ such that $b$ is $\phi_\oplus$-connected exactly to the $i$th and $j$th tuple in $J_{ij}$, which is expressed by the formula
\[
  \gamma_{ij}(\bar x_{1}, \bar y_{1} ,\ldots,\bar x_{k'}, \bar y_{k'}) = \exists z.
  \big(\bigwedge_{\ell \in \{i,j\}} \phi_\oplus(z,\bar x_{\ell} \bar y_{\ell}) \hspace{3mm}\wedge \bigwedge_{\ell \in [k']\setminus\{i,j\}} \hspace{-3mm}\neg\phi_\oplus(z,\bar x_{\ell} \bar y_{\ell})\big).
\]

If we are able to show that $\gamma_{ij}$ is contained in $\mathrm{EM}(I)$
for all $i, j \in [k']$,
then any subsequence of length~$2k'$ of $I$ witnesses that $\phi_\oplus$ has pairing index at least $k'$ and we arrive at a contradiction as desired.
Fix $i, j \in [k']$. 
Unpacking the tuples of $J_{ij}$ we obtain a subsequence $I_{ij}$ of $I$ on which $\gamma_{ij}$ is true.
However, since $\gamma_{ij}$ is not contained in $\Delta^\phi_k$, we cannot immediately conclude that $\gamma_{ij} \in \mathrm{EM}(I)$.

Instead, let $S \in \{\phi, \neg \phi\}^{2k'}$ be the $\phi$-connection pattern of $b$ towards $I_{ij}$.
This pattern depends on the way $b$ is connected towards $L,M,R$ as seen in \cref{fig:neighbourhood_nip}.
For example if $b$ is connected to all elements of $L$, no element of $M$ and all elements of $R$, we have that
\[
S=(\underbrace{\underbrace{\underbrace{\textcolor{blue}{\phi,\ldots,\phi}}_{2i-1},\textcolor{red}{\neg\phi,\ldots,\neg\phi}}_{2j-1},\textcolor{blue}{\phi,\ldots,\phi}}_{2k'}).
\]
The formula $\gamma_S(\bar x_1,\ldots, \bar x_{2k'})$ (defined via $S$ at the beginning of the section) expresses that there exists an element whose $\phi$-connection pattern towards the sequence $(\bar x_1,\ldots, \bar x_{2k'})$ is described by $S$.
The formula holds true on the sequence $I_{ij}$, as witnessed by $b$.
By definition $\gamma_S$ is contained in $\Delta_k^\phi$, and it follows by $\Delta_k^\phi$-indiscernibility of $I$ that $\gamma_S$ is contained in $\mathrm{EM}(I)$.
Since $\gamma_S$ implies $\gamma_{ij}$, we have that also $\gamma_{ij}$ is contained in $\mathrm{EM}(I)$ and we arrive at the desired contradiction.

Previously, we assumed $b$ is not $\phi$-connected to $M$.
If $b$ is $\phi$-connected to $M$, the same proof works using $\ell := j_2$ and $r := j_4$.
\end{proof}

The bound of $2$ is tight, since already in the class of matchings, we find arbitrarily long indiscernible sequences over which the edge relation has alternation rank $2$.
To achieve this bound, the restriction to formulas whose first free variable is a singleton is necessary.
Formulas with larger $\bar x$-tuples as free variables can have higher 
alternation rank even in monadically stable classes.
Consider for example the formula $\phi(x_1x_2,y) = (x_1 = y) \vee (x_2 = y)$, which has alternation rank $4$ on the class of edgeless graphs. 

Our following central observation shows that not only the alternation rank is bounded by~$2$ in monadically NIP classes, but in fact the alternations are centered around a single exceptional index.

\begin{restatable}{lemma}{iseqmonnip}
  \label{thm:iseq_n_mon_nip}
  For every monadically NIP class $\CC$ of structures  and formula $\phi(x,\bar y)$ 
  there exist integers~$n_0$ and $k$, such that for every 
  $\struc A \in \mathscr{C}$ the following holds. 
  If $I = (\bar a_i)_{1\leq i\leq n}$ is a $\Delta_k^\phi$-indiscernible sequence of length $n\geq n_0$ in $\struc A$,
  and $a \in \struc A$, then there exists an exceptional index $\mathrm{ex}(a) \in [n]$ and 
  two truth values $t_<(a),t_>(a) \in \{0,1\}$ such that
    \begin{align*}
      &\text{for all } 1 \leq i < \mathrm{ex(a)}: 
      \struc A \models \phi(a,\bar a_i) \hspace{2mm} \iff \hspace{2mm} t_<(a) = 1, \text{ and } \\
      &\text{for all } \mathrm{ex(a)} < i \leq n: 
      \struc A \models \phi(a,\bar a_i)\hspace{2mm} \iff \hspace{2mm} t_>(a)=1.
    \end{align*}
\end{restatable}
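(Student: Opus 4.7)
The plan is to strengthen Lemma~\ref{lem:alternation_rank} by ruling out, in addition to alternation rank $\geq 3$, the ``bump'' pattern $\phi^p\,\neg\phi^q\,\phi^r$ with middle-block size $q \ge 2$ (and its complement). Once both obstructions are excluded, the $\phi$-pattern of $a$ on $I$ must be either constant, a single alternation $X^p Y^q$, or a size-$1$ bump $X^p Y X^r$, in each of which the required $\mathrm{ex}(a)$ with truth values $t_<(a), t_>(a)$ can be read off directly (placed at either endpoint, or at the unique middle position).

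First apply Lemma~\ref{lem:alternation_rank} with constants $k_0, n_0^{(1)}$ to obtain alternation rank at most $2$, and use \Cref{cor:pairing-index} to obtain $k'$ such that $\phi_\oplus(x, \bar y_1 \bar y_2) \coloneqq \phi(x, \bar y_1) \oplus \phi(x, \bar y_2)$ has pairing index less than $k'$ on $\Cc$ (available because $\Cc$ is monadically NIP). Set $k \coloneqq \max(k_0, 2k')$ and $n_0 \coloneqq \max(n_0^{(1)}, 20k')$. Assume towards a contradiction that the pattern of $\phi(a, \cdot)$ on $I$ is a bump of middle-block size at least $2$; up to symmetry, there are $i_1 < i_2 < i_3 < i_4$ with $\struc A \models \phi(a, \bar a_{i_1}) \wedge \neg\phi(a, \bar a_{i_2}) \wedge \neg\phi(a, \bar a_{i_3}) \wedge \phi(a, \bar a_{i_4})$. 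Then $\gamma_{(\phi, \neg\phi, \neg\phi, \phi)} \in \Delta^\phi_k$, holds at these indices, and hence (by $\Delta^\phi_k$-indiscernibility) lies in $\mathrm{EM}(I)$.

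From here the proof mimics the pairing-index construction from Lemma~\ref{lem:alternation_rank}'s proof, but with the bump pattern $(\phi, \neg\phi, \neg\phi, \phi)$ in place of the alternating pattern $(\phi, \neg\phi, \phi, \neg\phi)$. By $\Delta^\phi_k$-indiscernibility there is a witness $b \in \struc A$ realizing this pattern at the well-spaced indices $4k', 8k', 12k', 16k'$; by majority arguments in the three gaps, extract subsequences $L, M, R$ of length $2k'$ on which $b$ is homogeneously $\phi$-connected, with $L$ in the first $\phi$-region, $M$ in the $\neg\phi$-middle, and $R$ in the second $\phi$-region. Let $(\bar a_\ell, \bar a_{\ell+1})$ be a consecutive pair straddling the first transition, and $(\bar a_{r-1}, \bar a_r)$ one straddling the second. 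Regroup $(L, \bar a_\ell, M, \bar a_r, R)$ into consecutive pairs to form $J = (L', \bar a'_\ell, M', \bar a'_r, R')$; then $\phi_\oplus(b, \cdot)$ is true on $J$ exactly at $\bar a'_\ell$ and $\bar a'_r$, giving pattern $F\cdots F\, T\, F\cdots F\, T\, F\cdots F$.

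For each $i < j$ in $[k']$, discard elements from $L', M', R'$ so that the two $T$'s land at positions $i$ and $j$ of a length-$k'$ subsequence $J_{ij}$; unpacking gives a length-$2k'$ subsequence $I_{ij}$ of $I$ on which $b$ realizes a specific $3$-block pattern $S_{ij}$. Since $\gamma_{S_{ij}} \in \Delta^\phi_k$ (as $k \ge 2k'$) and holds on $I_{ij}$, it lies in $\mathrm{EM}(I)$ and thus holds on every length-$2k'$ subsequence of $I$. Fixing any such subsequence and pairing it to tuples $\bar b_1, \ldots, \bar b_{k'}$ for $\phi_\oplus$, each $\gamma_{S_{ij}}$ supplies a witness $a_{ij}$ satisfying $\struc A \models \phi_\oplus(a_{ij}, \bar b_\ell) \iff \ell \in \{i, j\}$, exhibiting pairing index at least $k'$ for $\phi_\oplus$ and contradicting the choice of $k'$. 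The main obstacle is the careful bookkeeping to choose $k$ and $n_0$ so that $\Delta^\phi_k$ captures all of the required formulas $\gamma_{S_{ij}}$ and the sequence is long enough to support the majority arguments, together with the translation between the unpaired formulas $\gamma_{S_{ij}} \in \Delta^\phi_k$ (which certify the required $\phi$-patterns on $I$) and the paired formulas which certify the pairing index of $\phi_\oplus$.
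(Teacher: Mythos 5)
The proposal is correct and takes essentially the same route as the paper: both reduce to a contradiction with the pairing-index bound by locating the ``bump'' pattern $\gamma_{(\phi,\neg\phi,\neg\phi,\phi)}$, placing it in $\mathrm{EM}(I)$, producing a witness~$b$, regrouping consecutive elements into $\phi_\oplus$-pairs so that $b$ is $\phi_\oplus$-connected to exactly two tuples, and then translating back to patterns $\gamma_S\in\Delta_k^\phi$ to force pairing index at least $k'$. The one cosmetic difference is how the homogeneous blocks $L,M,R$ are obtained: the paper exploits the already-established alternation-rank-$2$ bound to pick $b$ so that its $\phi$-pattern on $I$ is exactly $\phi^{2k'}\,\neg\phi^{n-4k'}\,\phi^{2k'}$ (so the blocks come for free, with no new majority argument), while you re-run the majority extraction from the proof of Lemma~\ref{lem:alternation_rank} and separately fix two transition pairs $\bar a'_\ell,\bar a'_r$ to serve as the heterogeneous tuples. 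Your version is slightly longer but equally sound; one small caveat is that the phrase ``with $L$ in the first $\phi$-region'' should be read as a statement about the \emph{location} of $L$ (before index $4k'$), not about the direction of $b$'s connection to $L$, which the majority argument leaves undetermined --- but since consecutive pairs inside a homogeneous block are always $\neg\phi_\oplus$-connected regardless of direction, this does not affect the argument.
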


\begin{proof}
Choose $n_0,k$ and $k'$ as in the proof of \cref{lem:alternation_rank}.
If $\phi$ has alternation rank less than~$2$ in $I$, then the statement
obviously holds. 
Therefore, by \cref{lem:alternation_rank} we may assume that $\phi$ has alternation rank exactly~$2$.

Assume towards a contradiction that there exists an element $a \in \struc A$ whose
$\phi$-connection to every element of~$I$ cannot be described by choosing an exceptional index and two truth values. 
This is witnessed by indices $1 \leq i_1<i_2<i_3<i_4 \leq n$ such that $a$ is either $\phi$-connected to~$\bar a_{i_1}$ and $\bar a_{i_4}$ and not $\phi$-connected to $\bar a_{i_2}$ and $\bar a_{i_3}$ or vice versa.
We only consider the first case as the second follows by symmetry.
Note that the formula
\[
\gamma_{(\phi,\neg\phi,\neg\phi,\phi)}(\bar x_1,\bar x_2,\bar x_3,\bar x_4) 
:=
\exists z.\big(
\phi(z,\bar x_1) 
\wedge 
(\neg \phi)(z,\bar x_2) 
\wedge 
(\neg \phi)(z,\bar x_3) 
\wedge 
\phi(z,\bar x_4)\big)
\]
is contained in $\Delta^\phi_k$ and thus, as witnessed by $a$, is contained in $\mathrm{EM}(I)$.
Since additionally the alternation rank of $\phi$ is exactly $2$, there exists an element $b \in \struc A$
that is $\phi$-connected to exactly the first $2k'$ and the last~$2k'$ elements of $I$ (and to none of the elements in between).
As in the proof of \cref{lem:alternation_rank},
we can now regroup the elements in $I$ into a new sequence $J = (L',\bar a'_\ell, M', \bar a'_r, R')$ and
argue towards a contradiction in exactly the same way.
\end{proof}

In order to prove \cref{thm:iseq_n_mon_nip_multiple_formulas}, which is restated below, we generalize \cref{thm:iseq_n_mon_nip} to sets of formulas $\Phi$ by showing that there exists an integer~$k$ such that over $\Delta_k^\Phi$-indiscernible sequences for all $a\in \struc A$ all formulas of $\Phi$ have the same exceptional index. 

Recall that for a finite set $\Phi(x,\bar{y})$ of formulas of the form $\phi(x,\bar y)$,
where $x$ is a single variable,
a \emph{$\Phi$-type} is 
a conjunction $\tau(x,\bar y)$ of formulas in $\Phi$
or their negations, such that every formula in $\Phi$
occurs in $\tau(x,\bar y)$ either positively or negatively.

\iseqNMonNipMultipleFormulas*
\begin{proof}
  Choose $n_0$ and $k$ such that for every $\Phi$-type $\tau(x,\bar y)$,
  \cref{thm:iseq_n_mon_nip} holds on $\CC$ with these values of~$n_0$ and $k$.
  Let $I = (\bar a_i)_{1\leq i\leq n}$ be a $\Delta^\Phi_k$-indiscernible sequence of length $n\geq n_0$ and 
   let $a \in \struc A$.
   Note that $I$ is also $\Delta^\tau_k$-indiscernible  for every $\Phi$-type $\tau$.

  By increasing $n_0$ if necessary, we may suppose that $n_0$ is greater than $2^{|\Phi|}$,
  in particular, it is greater than the number of $\Phi$-types. 
  Thus, 
  there exists a $\Phi$-type $\tau$ such that $a$ is $\tau$-connected to
  at least two elements $\bar a_{\ell_1}$ and $\bar a_{\ell_2}$ from $I$.
  If $a$ is $\tau$-connected to all but at most one element from $I$, then we are done. 
  Hence, assume that $a$ is not $\tau$-connected to at least two elements from $I$.
  Since~$I$ is $\Delta_k^{\tau}$-indiscernible,
  by \cref{thm:iseq_n_mon_nip} there exists an index $i\in[n]$ such that 
  either $a$ is $\tau$-connected to the elements $(\bar a_j)_{j< i}$ 
  and not $\tau$-connected to the elements $(\bar a_j)_{j\geq i}$, or vice versa.
  By symmetry, we can assume the first case.
  Therefore, $\tau$ will play the role of $\tau_-$ in the statement. We now prove that there is a $\Phi$-type $\tau_+$ such that $a$ is $\tau_+$-connected to all tuples after $\bar a_i$ in $I$.
  
  Assume towards a contradiction that there are two distinct $\Phi$-types $\sigma_1,\sigma_2$
  and indices $r_1,r_2\in[n]$ greater than $i$, such that $a$ is $\sigma_1$-connected to $\bar a_{r_1}$ 
  and $\sigma_2$-connected to $\bar a_{r_2}$.
  By symmetry, we can assume $1 \leq \ell_1 < \ell_2 < i < r_1 < r_2 \leq n$.
  Now, if $a$ is $\sigma_1$-connected to $\bar a_i$, then we have
  \[
    \struc A \models 
    (\neg \sigma_1)(a, \bar a_{\ell_2}) 
    \wedge 
    \sigma_1(a, \bar a_{i})
    \wedge
    \sigma_1(a, \bar a_{r_1})
    \wedge
    (\neg \sigma_1)(a, \bar a_{r_2}).
  \]
  This however contradicts \cref{thm:iseq_n_mon_nip}, as $I$ is $\Delta_k^{\sigma_1}$-indiscernible and there can be at most one exceptional index for $a$ and $\sigma_1$. 
  
  Otherwise, $a$ is $\sigma_3$-connected towards $\bar a_i$ for some $\Phi$-type  $\sigma_3$ which is neither equal to~$\tau$ nor $\sigma_1$.
  Denote $\pi:=\sigma_2\lor\sigma_3$.
  Then we have
  \[
    \struc A \models 
    (\neg\pi)(a, \bar a_{\ell_2}) 
    \wedge 
    \pi (a, \bar a_{i})
    \wedge
    (\neg\pi)(a, \bar a_{r_1})
    \wedge
    \pi(a, \bar a_{r_2}).
  \]
Again this contradicts \cref{thm:iseq_n_mon_nip}, as $I$ is $\Delta_k^{\pi}$-indiscernible and there can be at most one exceptional index for $a$ and $\pi$. 

  It follows that there exists a $\Phi$-type $\tau_+$ such that $a$ is $\tau_+$-connected to all elements after $\bar a_i$ in~$I$.
  Moreover, $a$ is $\tau$-connected to all elements before $\bar a_i$ in $I$.
\end{proof}

Restricting to the monadically stable case, we can finally prove \cref{cor:iseq_n_mon_stable_multiple_formulas}, which is restated below for convenience.

\iseqNMonStableMultipleFormulas*
\begin{proof}
  Let $\tau_-$ and $\tau_+$ be the $\Phi$-types obtained from 
    \Cref{thm:iseq_n_mon_nip_multiple_formulas}, so that $a$ is $\tau_-$-connected to all elements of $I$ before the exceptional one, and is $\tau_+$-connected to all elements of $I$ after the exceptional one.
    Suppose towards contradiction that it is not the case that all but one element of $I$ 
    is $\tau_-$ connected to $a$,
    and it is not the case that all but one element of $I$ is $\tau_+$ connected to $a$.
    In particular, $\tau_-$ and $\tau_+$ are distinct,
    and there are indices $i<j$ in $[n]$ with 
  \[
    \struc A \models
    \tau_-(a,\bar a_1) \wedge
    \tau_-(a,\bar a_i) \wedge
    (\neg\tau_-)(a,\bar a_j)\wedge
    (\neg\tau_-)(a,\bar a_n),
  \]
  or 
  \[
    \struc A \models
    (\neg\tau_+)(a,\bar a_1) \wedge
    (\neg\tau_+)(a,\bar a_i) \wedge
    \tau_+(a,\bar a_j)\wedge
    \tau_+(a,\bar a_n).
  \]
  By symmetry, assume the first case holds.
By indiscernibility of $I$, 
for every pair $i<j$ of indices in $[n]$,
we have 
\[\struc A \models\exists z.
\tau_-(z,\bar a_1) \wedge
\tau_-(z,\bar a_i) \wedge
(\neg\tau_-)(z,\bar a_j)\wedge
(\neg\tau_-)(z,\bar a_n).\]
Moreover, for any pair $1<j\le i<n$ of indices in $[n]$
the above formula does \emph{not} hold in $\struc A$:
for $j=i$ it is clearly contradictory, and for $j<i$
this would yield alternation rank $3$,
contrary to \cref{lem:alternation_rank}.
Hence, the formula 
  \[\gamma(\bar x_1,\bar x_2)=
    \exists z.
    \tau_-(z,\bar a_1) \wedge
    \tau_-(z,\bar x_1) \wedge
    (\neg\tau_-)(z,\bar x_2)\wedge
    (\neg\tau_-)(z,\bar a_n).
  \]
  (with two tuples of variables as free variables and~$\bar a_1$ and~$\bar a_n$ fixed as parameters)
  defines an order on $\bar a_{2}, \dots \bar a_{n-1}$. 
  Since $\CC$ is stable, there is an integer $\ell$ such that $\psi$ cannot define orders longer than $\ell$. By choosing $n_0$ greater than $\ell+2$ we obtain a contradiction. 
\end{proof}

The following statement
 follows immediately from \Cref{cor:iseq_n_mon_stable_multiple_formulas}.

\begin{corollary}
  \label{thm:iseq_n_mon_stable}
  For every monadically stable class $\CC$ of structures  and formula $\phi(x,\bar y)$ 
  there exist integers $n_0$ and $k$, such that for every 
  $\struc A \in \mathscr{C}$ the following holds. 
  If $I = (\bar a_i)_{1\leq i\leq n}$ is a $\Delta_k^\phi$-indiscernible sequence of length $n\geq n_0$ in $\struc A$,
  and $a \in \struc A$, then either
  \[|N_\phi(a) \cap I| \leq 1 \text{ or } |N_\phi(a) \cap I| \geq |I|-1.\]
\end{corollary}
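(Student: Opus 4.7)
The plan is to derive this corollary as an immediate specialization of \cref{cor:iseq_n_mon_stable_multiple_formulas} to the singleton formula set $\Phi = \{\phi\}$. Let $n_0$ and $k$ be the integers provided by that theorem for the set $\Phi = \{\phi\}$ and the monadically stable class $\CC$. For any $\struc A \in \CC$, any $\Delta_k^\phi$-indiscernible sequence $I = (\bar a_i)_{1\le i\le n}$ of length $n\geq n_0$ in $\struc A$, and any $a\in\struc A$, the theorem produces an exceptional index $\mathrm{ex}(a) \in [n]$ and a $\Phi$-type $\tau(x,\bar y)$ such that $\struc A \models \tau(a,\bar a_i)$ for every $i\in[n]$ with $i \neq \mathrm{ex}(a)$.

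Because $\Phi$ has only one element, there are precisely two $\Phi$-types: $\tau = \phi$ and $\tau = \neg\phi$. The argument then splits into these two cases. If $\tau = \phi$, then $\struc A \models \phi(a,\bar a_i)$ holds for all $i \neq \mathrm{ex}(a)$, so $\bar a_i \in N_\phi(a)$ for at least $n-1$ values of $i$, giving $|N_\phi(a) \cap I| \geq |I|-1$. If instead $\tau = \neg\phi$, then $\struc A \models \neg\phi(a,\bar a_i)$ holds for all $i\neq\mathrm{ex}(a)$, so at most the single tuple $\bar a_{\mathrm{ex}(a)}$ can lie in $N_\phi(a)$, giving $|N_\phi(a) \cap I| \leq 1$. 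In either case, the desired dichotomy is established.

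There is essentially no obstacle here: the corollary is a direct unpacking of the multi-formula version restricted to a single formula, and the observation that a $\{\phi\}$-type is just a choice between $\phi$ and its negation.
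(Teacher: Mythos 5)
Your proof is correct and matches the paper's intent exactly: the paper declares the corollary an immediate consequence of \cref{cor:iseq_n_mon_stable_multiple_formulas}, and you have simply unpacked that immediacy by noting $\Delta_k^\phi = \Delta_k^{\{\phi\}}$ and that the two $\{\phi\}$-types $\phi$ and $\neg\phi$ yield the two branches of the dichotomy.
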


\section{Disjoint definable neighborhoods in monadically NIP classes}
\label{sec:families}

In the previous section we have seen that in monadically stable and monadically NIP classes, every element is very homogeneously connected to all but at most one element of an indiscernible sequence.
At a first glance, this exceptional behavior towards one element seems to be erratic and standing in the way of combinatorial and algorithmic applications.
However, it turns out that it can be exploited to obtain additional structural properties.

The key observation is that elements that are ``exceptionally connected'' towards a single element of an indiscernible sequence, inherit some of the good properties of that sequence.
We will give a simple example to demonstrate this idea.
Let $G$ be a graph containing certain red vertices $R$ and blue vertices $B$, such that the edges between $R$ and $B$ describe a matching (see \cref{fig:matching_stars-appendix}, left).

\begin{figure}[h]
  \begin{center}
  \includegraphics[scale=1]{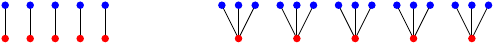}
  \end{center}
  \caption{Examples of one-to-one and many-to-one connections in a graph. On the left: a matching. On the right: a star forest.}
  \label{fig:matching_stars-appendix}
\end{figure}

Given a formula $\phi(x_1,\ldots, x_q)$, define 
\[
\hat \phi(x_1,\ldots,x_q)
:= 
\exists b_1,\ldots,b_q \in B. 
\phi(b_1,\ldots,b_q)
\wedge
\bigwedge_{i\in [q]} E(x_i, b_i).
\]

Take a $\hat \phi$-indiscernible sequence $R'$ among $R$.
Obviously, every vertex in the blue neighborhood $B'$ of $R'$ has one exceptional connection towards $R'$, that is, towards its unique matching neighbor in~$R'$.
It is now easy to see that $B'$ is a $\phi$-indiscernible sequence in $G$: for every sequence of red vertices $a_1,\ldots,a_q \in R'$
and their unique blue matching neighbors $b_1,\ldots,b_q \in B'$ we have
$
G\models \hat \phi (a_1,\ldots,a_q )$ if and only if 
$G\models \phi (b_1,\ldots,b_q)$.

This example sketches how first-order definable one-to-one connections
towards elements of an indiscernible sequence preserve indiscernibility.
The more general case however is the many-to-one case,
where we have a set of elements, each of which is exceptionally connected to a single element of an indiscernible sequence $I$, while allowing multiple elements to be exceptionally connected to the same element of $I$
(think, for example, of $R$ and $B$ being the centers and leaves of a star-forest as depicted in \cref{fig:matching_stars-appendix}, right).
Our notion of such many-to-one connections will be that of \emph{disjoint $\alpha$-neighborhoods}.
Recall, that for a formula~$\alpha(\bar x,y)$, the \(\alpha\)-neighborhood $N_\alpha(\bar a)$ of a tuple $\bar a$ is defined as the set of all \(b\) satisfying $\alpha(\bar a,b)$.
We say a sequence $J$ of  $|\bar x|$-tuples has \emph{disjoint $\alpha$-neighborhoods} if 
$N_\alpha(\bar a_1)\cap N_\alpha(\bar a_2)=\varnothing$ for all distinct $\bar a_1,\bar a_2\in J$.

As the main technical tool of this paper, we prove for
monadically NIP classes
that every sequence of disjoint $\alpha$-neighborhoods contains a large subsequence that exerts strong control over its neighborhood.
This lifts the strongly regular behaviour of idiscernible sequences to sequences of disjoint $\alpha$-neighborhoods.
The main result of this section, \cref{thm:disjoint_families_nip}, is the backbone of our flip-flatness proof and states the following.
In every large sequence of disjoint $\alpha$-neighborhoods, we will find 
a still-large subsequence such that the $\phi$-connections of every element $a$ towards all $\alpha$-neighborhoods in the subsequence can be described by a bounded set of sample elements in the following sense.
After possibly omitting one exceptional neighborhood at index $\mathrm{ex(a)}$ depending on~$a$,
the \(\phi\)-connections are completely homogeneous (in the stable case) or alternate at most once (in the NIP case).
A related (but orthogonal) result was also proved in \cite[Lemma 64]{twwIVarxiv} using tools from model theory.

\begin{figure}[h]
  \begin{center}
  \includegraphics[scale=1.2]{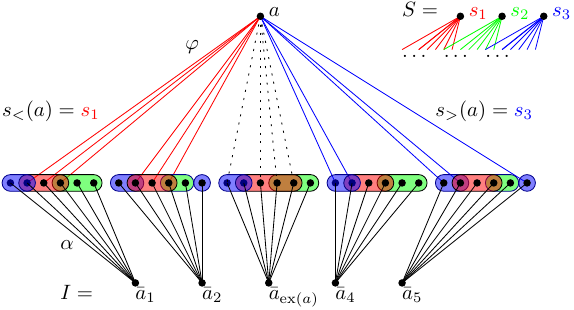}
  \end{center}
  \caption{
    A visualization of \cref{thm:disjoint_families_nip}. On the bottom: a sequence $I$ with disjoint $\alpha$-neighborhoods. On the top right: a small set of sample elements $S$.
    The $\phi$-neighborhoods of the elements in $S$ in the $\alpha$-neighborhood of $I$ are colored accordingly. Note that their $\phi$-neighborhoods can overlap.
The $\phi$-neighborhood of $a$ in the $\alpha$-neighborhood of $I$ is equal to the $\phi$-neighborhood of $s_<(a) = s_1$ before the exceptional index $\mathrm{ex}(a)$. After it, it is equal to the $\phi$-neighborhood of $s_>(a) = s_3$. For the $\alpha$-neighborhood of $\bar a_{\mathrm{ex}(a)}$ we make no claims.}
  \label{fig:families-appendix}
\end{figure}

\begin{restatable}{theorem}{thmDisjointFamiliesNip}\label{thm:disjoint_families_nip}
  For every monadically NIP class of structures $\Cc$, and formulas $\phi(x,y)$ and $\alpha(\bar x,y)$, there exists a function $N:\NN \rightarrow \NN$ and an integer $k$ such that 
  for every $m\in \NN$
  every structure $\struc A \in \CC$ and every finite sequence of tuples $J\subseteq \struc A^{|\bar x|}$ of length $N(m)$ whose $\alpha$-neighborhoods are disjoint the following holds. 
  There exists a subsequence $I =(\bar a_i)_{1\leq i\leq m}\subseteq J$ of length $m$ and a 
  set $S \subseteq \struc A$ of at most $k$ sample elements such that for every element $a \in \struc A$ there exists an exceptional index $\mathrm{ex}(a) \in [m]$ and a pair $s_\sless(a),s_\sgreater(a)\in S$ such that 
  \begin{alignat*}{2}
      \text{for all }& ~1 \leq i < \mathrm{ex(a)}&&: N_\alpha(\bar a_i) \cap N_\phi(a) = N_\alpha(\bar a_i) \cap N_\phi(s_\sless(a)), \text{ and} \\
      \text{for all }& ~\mathrm{ex(a)} < i \leq m&&: N_\alpha(\bar a_i) \cap N_\phi(a) = N_\alpha(\bar a_i) \cap N_\phi(s_\sgreater(a)).
  \end{alignat*}
  If $a \in N_\alpha(\bar a_i)$ for some $\bar a_i\in I$, then $i = \mathrm{ex}(a)$.
  
  \medskip\noindent
  If $\CC$ is monadically stable, then $s_\sless(a) = s_\sgreater(a)$ for every $a\in \struc A$. 
\end{restatable}

A visualization of the NIP case of \cref{thm:disjoint_families_nip} is provided in \cref{fig:families-appendix}.
After proving \Cref{thm:disjoint_families_nip}, we will give an algorithmic version of the result in \cref{sec:algorithmic-aspects}. 
We will need the following definition for both.


\begin{definition}\label{def:phi_equality}
For a set $B$ we call two elements \emph{$\phi$-equivalent over $B$} if they 
have the same $\phi$-neighborhood in $B$ and either both or none of them are contained in $B$. 
Respectively, we call them \emph{$\phi$-inequivalent over $B$} if they are not $\phi$-equivalent over $B$.
Observe that there exists a formula
\[
  \mathrm{eq}^\phi_\alpha(x,y,\bar z)
  := 
  \big (\alpha(\bar z, x) \leftrightarrow \alpha(\bar z, y) \big)
  \wedge
  \forall w. \Big(\alpha(\bar z, w) 
  \rightarrow 
  \big (\phi(x,w) \leftrightarrow \phi(y,w) \big )\Big)
\]
that is true if and only if $x$ and $y$ are $\phi$-equivalent over the $\alpha$-neighborhood of $\bar z$. 
\end{definition}
%
%
%

\begin{proof}[Proof of \cref{thm:disjoint_families_nip}]
    We split the proof into multiple paragraphs.
        
    \medskip\noindent\textbf{Defining constants $\boldsymbol \ell$, $\boldsymbol s$ and $\boldsymbol k$.}
    We are going to use a Ramsey-type result for set systems due to Ding et al.~\cite{DBLP:journals/jct/DingOOV96},
    which we are going to reformulate using notation from graph theory for convenience.
    We say a bipartite graph with sides $a_1,\ldots, a_\ell$ and $b_1,\ldots, b_\ell$ forms
    \begin{itemize}
    \item a \emph{matching} of order $\ell$ if $a_i$ and $b_j$ are adjacent if and only if $i=j$ for all $i,j\in [\ell]$, 
    \item a \emph{co-matching} of order $\ell$ if $a_i$ and $b_j$ are adjacent if and only if $i\neq j$ for all $i,j\in [\ell]$, 
    \item a \emph{ladder} of order $\ell$ if $a_i$ and $b_j$ are adjacent if and only if $i\leq j$ for all $i,j\in [\ell]$. 
    \end{itemize}

    \begin{restatable}[{\cite[Corollary 2.4]{DBLP:journals/jct/DingOOV96}}]{theorem}{thmmatching}\label{thm:matching}
      There exists a function $Q:\NN\rightarrow\NN$ such that for every $\ell\in\NN$ and for every bipartite graph $G = (L,R,E)$ without twins, where $L$ has size at least~$Q(\ell)$, contains a matching, co-matching, or ladder of order $\ell$ as an induced subgraph.
  \end{restatable}

    Let $q$ be the maximum of the quantifier rank of $\alpha$ and $\phi$.
    Let $\Sigma$ be the signature of $\CC$ and for $i\in \NN$ let~$\Sigma^+_i$ be a monadic extension of $\Sigma$ with $i$ new colors. 
    Let $\ell$ be the smallest number such that every formula over~$\Sigma^+_3$ with the same free variables as $\alpha$ and quantifier rank at most $q+3$ has pairing index less than $\ell$
    in $\Cc[\Sigma^+_3]$. Since up to equivalence there are only finitely many such formulas and $\CC$ is monadically NIP, by \cref{cor:pairing-index} such $\ell$ exists. 
    Let $s := Q^{3\ell}(\ell^2)$, where $Q$ is the function given by \cref{thm:matching}.
    Let $\{c_1,\ldots,c_s\}$ be a set of $s$ fresh constant symbols.
    For every $i\in[s]$  define 
    \[
    \Phi_i(x,\bar y) 
    := 
    \{\mathrm{eq}_\alpha^\phi(c_j,x,\bar y) ~:~ j\in [i]\}
    .\]

    Choose $k$ larger than $s$ such that
    \cref{thm:iseq_n_mon_nip_multiple_formulas} holds for the set of formulas $\Phi_{s}$ with this choice of $k$ 
    in the class of all possible extensions of structures in $\CC$ with constants $\{c_1,\ldots,c_s\}$. 
    
    \newcommand{\aex}{\bar a_{\mathrm{ex}}}
    \newcommand{\Aex}{N_\alpha(\aex)}

    \paragraph*{Construction of $\boldsymbol S$ and $\boldsymbol I$.}
    For a sequence $I$ denote by $\mathcal N_\alpha(I)$ the sequence of its $\alpha$-neighborhoods $(N_\alpha(\bar a))_{\bar a \in I}$.
    For every $i\in[s]$ we are going to inductively construct 
    a set of sample elements $S_i = \{s_1,\ldots,s_i\} \subseteq \struc A$ together with a $\Delta^{\Phi_i}_{k}$-indiscernible subsequence $I_i$ of $J$ in the structure~$\struc A^+_i$, which is the expansion of $\struc A$ with the constant symbols $\{c_1,\ldots,c_i\}$ interpreted by the elements of~$S_i$.
    During the construction we are going to ensure that
    \begin{itemize}
        \item no element of $S_i$ is contained in a set from $\mathcal N_\alpha(I_i)$, and 
        \item all elements from $S_i$ are pairwise $\phi$-inequivalent over every set of $\mathcal N_\alpha(I_i)$.
    \end{itemize}
    
    Start with $S_0 := \varnothing$ and let $I_0$ be a $\Delta^{\Phi_0}_{k}$-indiscernible subsequence of $J$ in $\struc A^+_0 = \struc A$.
    In order to construct~$S_{i+1}$ from $S_{i}$ we proceed as follows.
    If for every $a \in \struc A$ there exist $s_<,s_> \in S_i$ and an element $\aex \in I_i$ and $a$ is $\phi$-equivalent to $s_<$ over all sets in $\mathcal N_\alpha(I_i)$ before $N_\alpha(\aex)$ and $\phi$-equivalent to $s_>$ over all sets in~$\mathcal N_\alpha(I_i)$ after $N_\alpha(\aex)$, 
    then we terminate the induction and set $S := S_{i}$ and $I := I_{i}$.
    Otherwise, we claim the following:

    \begin{claim}
    If the induction does not terminate after step $i$, then there exists an element $s_{i+1} \in \struc A$
    such that over all but at most two sets from $\mathcal{N}_\alpha (I_i)$, $s_{i+1}$ is $\phi$-inequivalent to every element in $S_i$.  
    \end{claim}

    \begin{claimproof}
    Since the induction did not terminate yet, there exists an $a\in \struc A$,
    such that for every element $\aex \in I_i$, either for all $s_< \in S_i$, $a$ is $\phi$-inequivalent to $s_<$ over some set before $\Aex$ in $\mathcal N_\alpha(I_i)$ 
    or for all $s_> \in S_i$, $a$~is $\phi$-inequivalent to $s_>$ over some set after $\Aex$ in $\mathcal N_\alpha(I_i)$.
    By $\Delta_k^{\Phi_i}$-indiscerniblity of $I_i$ we can apply \cref{thm:iseq_n_mon_nip_multiple_formulas}.
    This yields an element $\aex \in I_i$ and $\Phi_i$-types $\tau_-,\tau_0,\tau_+$ 
     such that $a$ is $\tau_-$-connected to all elements before $\aex$,
    $\tau_0$-connected to $\aex$, and 
    $\tau_+$-connected to all elements after $\aex$ in $I_i$.
    This means in particular that on all sets before $\Aex$, $a$ is $\phi$-equivalent to $s_< \in S_i$ if and only if $\mathrm{eq}^\phi_\alpha(s_<,x,\bar y)$ 
    appears as a positive conjunct in $\tau_-$.
    By symmetry, up to reversing $I_i$, we can assume that 
    for all $s_< \in S_i$, $a$ is $\phi$-inequivalent to $s_<$ over some set before $\Aex$.
    By combining these two observations we obtain that
    $\tau_-$ cannot have any positive conjuncts, i.e., on all sets before $\Aex$ in $\mathcal N_\alpha(I_i)$, $a$ is $\phi$-inequivalent to all elements from $S_i$.
    To summarize,
    \begin{itemize}
      \item $\tau_-$ is the conjunction 
      of all conjuncts $\neg\mathrm{eq}^\phi_\alpha(s,x,\bar y)$,
      for all $s\in S_i$,
      \item $a$ is $\tau_-$-connected to all elements before $\aex$ in $I_i$, of which there exists at least one, 
      \item $a$ is $\tau_0$-connected to $\aex$,
      \item $a$ is $\tau_+$-connected to all elements after $\aex$ in $I_i$.
    \end{itemize}
    
    We will now use these observations to find an element $s_{i+1} \in \struc A$ that is $\tau_-$-connected to all but at most two elements from $I_i$, which then proves the claim.
    If $\aex$ is the last element in $I_i$ or if $\tau_+=\tau_-$, 
    then~$a$ is $\tau_-$-connected to all elements of $I_i$ except possibly $\aex$ and we can set $s_{i+1} := a$.
    Assume therefore that there exists an element after $\aex$ in $I_i$ and we have $\tau_-\neq\tau_+$.
    If there exists only a single element before~$\aex$ and $\tau_0 = \tau_+$, 
    then (with $\tau_0=\tau_+ \neq \tau_-$) for some $s_> \in S_i$, we have that $a$ is $\phi$-equivalent to $s_>$ over every but the first set of $\mathcal{N}_\alpha(I_i)$.
    This contradicts our previous choice of $a$.

    Thus, from now on, we know the following two things:
    The first element $\bar b_1$ and the last element $\bar b_3$ of $I_i$ are both not equal to $\aex$.
    Additionally, there either exists a second element $\bar b_2$ between~$\bar b_1$ and $\aex$ in $I_i$, or $\tau_0 \neq \tau_+$.
    The formula 
    \[
    \psi(\bar y_1, \bar y_2, \bar y_3) 
    := 
    \exists z. \big(
    \tau_-(z,y_1) \wedge \neg \tau_+(z,y_2) \wedge  \tau_+(z,y_3)\big)
    \]
    holds on either $(\bar b_1,\bar b_2, \bar b_3)$ in the first case or $(\bar b_1, \aex, \bar b_3)$ in the second case.
    Either way, as $\psi \in \Delta_k^{\Phi_i}$ and by $\Delta_k^{\Phi_i}$-indiscernibility of $I_i$, these witnesses imply that also $\psi \in \mathrm{EM}(I_i)$.
    Therefore, there must also exist an element $s_{i+1}\in \struc A$ that is $\tau_-$-connected to the $(|I_i|-2)$th, not $\tau_+$-connected to the $(|I_i|-1)$th element, and $\tau_+$-connected to the last element of $I_i$.
    By \cref{thm:iseq_n_mon_nip_multiple_formulas}, $s_{i+1}$ is $\tau_-$-connected to all but the last two elements of $I_i$.
  \end{claimproof}
    
  Applying the claim yields an element $s_{i+1}$ that we place into the set $S_{i+1} := S_i \cup \{s_{i+1}\}$.
  Now by removing from $\mathcal{N}_\alpha(I_i)$ the two 
  outlier sets as well as the set containing $s_{i+1}$, we obtain a corresponding sequence $I_i'$ of length at least $|I_i| - 3$ such that the elements of $S_{i+1}$ are pairwise $\phi$-inequivalent over every set in~$\mathcal N_\alpha (I_i')$ and no element from $S_{i+1}$ is contained in any set from $\mathcal N_\alpha (I_i')$.
  We finish the inductive step by setting $I_{i+1}$ to be a $\Delta^{\Phi_{i+1}}_k$-indiscernible subsequence of $I_i'$ in $\struc A^+_{i+1}$.

  If the induction terminates with a set $S$ and a sequence $I=(\bar a_1,\ldots,\bar a_{|I|})$,
  then it is now easy to see that every element $a \in \struc A$ has an exceptional index $1\leq \mathrm{ex}(a) \leq |I|$ and a pair $s_\sless(a),s_\sgreater(a)\in S$ such that 
  \begin{alignat*}{2}
    \text{for all }& ~1 \leq i < \mathrm{ex(a)}&&: N_\alpha(\bar a_i) \cap N_\phi(a) = N_\alpha(\bar a_i) \cap N_\phi(s_\sless(a)), \text{ and} \\
    \text{for all }& ~\mathrm{ex(a)} < i \leq |I|&&: N_\alpha(\bar a_i) \cap N_\phi(a) = N_\alpha(\bar a_i) \cap N_\phi(s_\sgreater(a)).
  \end{alignat*}

    \begin{figure}
      \begin{center}
      \includegraphics[scale=1]{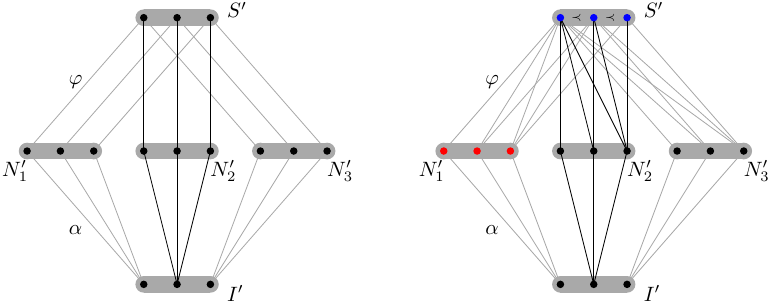}
      \end{center}
      \caption{Visualizations of the matching and the ladder case.}
      \label{fig:families_matching_ladder}
    \end{figure}

    \paragraph*{Bounding the size of $\boldsymbol S$.}
    We claim that the induction terminates after less than $s$ steps, i.e., $S$ has size less than $s$.
    Assume towards a contradiction that $S$ has size at least $s$.
    Take a look at the bipartite graph $\phi$-interpreted between $S$ and any set $F \in \mathcal{N}_\alpha(I)$.
    By construction, all elements from $S$ are pairwise $\phi$-inequivalent over $F$,
    meaning that they either have different $\phi$-neighborhoods in $F$, or their containment in~$F$ differs.
    As no element from $S$ is contained in $F$ we can assume the former
    and thus $S$ contains no twins.
    Since we have that $s = Q^{3\ell}(\ell^2)$ we can iterate \cref{thm:matching} a total of $3\ell$ many times to find a set $S'\subseteq S$ of size $\ell^2$, together with a length $3\ell$ subsequence $J'\subseteq I$, such that for every $\bar a \in J'$ the $\phi$-interpreted bipartite graph between $S'$ and $N_\alpha(\bar a)$ contains an induced matching, co-matching, or ladder of order $\ell^2$.
    By a majority argument, we find a length $\ell$ subsequence $I' \subseteq J'$ such that between $S'$ and every set in $\mathcal{N}_\alpha(I')$, the induced subgraph is of the same type, e.g.\ all matching, all co-matching, or all ladder.
    Let $I' = (\bar a_1,\ldots,\bar a_\ell)$ and for every $i\in[\ell]$ denote by $N'_i$ the restriction of $N_\alpha(\bar a_i)$ to the elements that form via $\phi$ the matching, co-matching or ladder with $S'$.

    Assume we find matchings.
    Let us now show that the formula
    \[
      \psi(x,\bar y) := \exists z \in N_\phi(x) \cap N_\alpha(\bar y).\ R(z)
    \]
    has pairing index at least $\ell$ in the class $\CC$ extended with an additional unary predicate $R$.
    The situation is depicted in the left-hand side of \cref{fig:families_matching_ladder}. 
    The formulas $\phi$ and $\alpha$ interpret a large 1-subdivided biclique in $\struc A$ as follows. 
    The tuples from $I'$ and the elements from $S'$ form the principle vertices.
    The subdivision vertices are formed by $\bigcup_{i\in[\ell]}N_i'$. 
    Due to the assumed matchings, each subdivision element has exactly one incoming $\phi$-connection from $S'$. 
    Due to the tuples in $I'$ having disjoint $\alpha$-neighborhoods, each subdivision element has exactly one incoming $\alpha$-connection from the elements of $I'$. 
    Since $I'$ has size~$\ell$ and $S'$ has size $\ell^2$ we can assign 
    to every pair $\bar a_i,\bar a_j \in I'$ a unique element $s_{ij} \in S'$.
    By marking with the predicate $R$ the two unique subdivision elements from $N_\phi(s_{ij})\cap N_i'$ and $N_\phi(s_{ij})\cap N_j'$, we get that $s_{ij}$ is $\psi$-connected only to $\bar a_i$ and $\bar a_j$ among $I'$.
    As a result, $S'$ and $I'$ witness that $\psi$ has pairing index at least $\ell$.
    
    Let us stress that $\psi$ is works over $\Sigma^+_1$, using only a single new color. The additional constants 
    we marked in $\struc A$ during the construction of $S$ and $I$ were only used to 
    extract $S$ and $I$, but they are 
    not used by $\phi$, $\alpha$ or $\psi$.
    Since $\psi$ has the same free variables as $\alpha$ and quantifier rank at most~$q+1$,
    we obtain a contradiction to our definition of $\ell$.
    
    We have shown that a contradiction arises in case we find $\phi$-matchings between $S'$ and the sets in~$\mathcal N_\alpha(I')$.
    We will now show that we can also build matchings in the cases where we find a co-matching or a ladder.
    Intuitively, matchings, co-matchings and ladders are the graph-equivalent of the comparison relations $=,\neq$, and $\leq$ and our task boils down to expressing $=$ using $\neq$ and $\leq$ respectively.
    In case we find co-matchings, it simply suffices to work with $\neg \phi$ instead of $\phi$.
    In case we find ladders we can recover matchings as follows.
    We introduce two additional color predicates marking the elements in of $S'$ blue and the elements in  $N'_1$ red.
    We can now define a total order on $S'$ using the formula $x_1 \prec x_2 := \exists y. \big( \mathrm{Red}(y) \wedge \phi(x_1,y) \wedge \neg \phi(x_2,y)\big)$.
    The situation is depicted in the right-hand side \cref{fig:families_matching_ladder}.
    Now the formula 
    \[
      \phi'(b,r) 
      :=
      \phi(b,r)
      \wedge 
      \neg \exists b'.
      \big (
      \mathrm{Blue}(b')
      \wedge b' \prec b
      \wedge \phi(b',r)
      \big )
    \]
    stating that $b$ is smallest (with respect to $\prec$) blue element $\phi$-connected to $r$, interprets a matching between~$S'$ and every $N_i'$ for $i \in [\ell]$. 
    Therefore, in the ladder case we arrive at the same contradiction as in the matching case using the formula $\phi'$ instead of $\phi$.
    As we introduced two additional color predicates and quantifiers, we obtain a formula with quantifier rank at most $q+1+2$ and ladder index $\ell$, which still leads to a contradiction.
       
    \paragraph*{Bounding the length of $\boldsymbol I$.}
    Since $S$ has size at most $s$, $I$ is obtained from $J$ by iteratively taking 
    $\Delta^{\Phi_s}_k$-indiscernible sequences (and removing up to three elements) for at most $s$ times.
    If $J$ has length $N(m) := g^k(m)$, where $g(m):= f(m) + 3$ and $f$ is the function from \Cref{thm:indiscernible_ramsey} with $\Delta=\Delta^{\Phi_s}_k$,
    then by said theorem, $I$ has length at least $m$, as required.

    \paragraph*{The special case for $\boldsymbol{a \in A \in \mathcal N_\alpha (I)}$.}
    Let $a$ be an element contained in $A_i = N_\alpha(\bar a_i) \in \mathcal N_\alpha (I)$ for some $i \in [m]$.
    We previously established that there exist $s_<,s_> \in S$ and a set $A_j = N_\alpha(\bar a_j) \in \mathcal N_\alpha (I)$ for some $j \in [m]$
    such that $a$ is $\phi$-equivalent to $s_<$ over all sets in $\mathcal N_\alpha(I)$ before $A_j$ and $\phi$-equivalent to $s_>$ over all sets in $\mathcal N_\alpha(I)$ after $A_j$.
    Assume towards a contradiction that $i \neq j$.
    Then $a$ must be $\phi$-equivalent to $s_<$ or $s_>$ over $A_i$. 
    Assume, by symmetry, $a$ and $s_<$ are $\phi$-equivalent over~$A_i$. 
    Since $a$ is contained in $A_i$, also $s_<$ must be, contradicting the fact that $S \cap \bigcup \mathcal{N}_\alpha (I) = \varnothing$.

    \paragraph*{The stable case.}
    Assume $\CC$ is monadically stable.
    As established previously for
    every $a \in \struc A$ there exists an index $\mathrm{ex}(a) \in [m]$ and $s_\sless,s_\sgreater \in S$ such that $a$ is $\mathrm{eq}^\phi_\alpha(s_\sless,x,\bar y)$-connected to $\bar a_1,\ldots,\bar a_{\mathrm{ex}(a)-1}$ and $\mathrm{eq}^\phi_\alpha(s_\sgreater,x,\bar y)$-connected to $\bar a_{\mathrm{ex}(a)+1},\ldots,\bar a_m$.
    By symmetry, we can assume $\mathrm{ex}(a) \leq \frac{m}{2}$,
    meaning that $a$ is $\mathrm{eq}^\phi_\alpha(s_\sgreater,x,\bar y)$-connected to at least two elements of $I$.
    Without loss of generality, we can assume $k$ to be large enough that
    by $\Delta^{\Phi_s}_k$-indiscernibility of $I$ and by \cref{thm:iseq_n_mon_stable}, $a$ must be $\mathrm{eq}^\phi_\alpha(s_\sgreater,x,\bar y)$-connected to all but at most one element of $I$.
    We can therefore assume that $\mathrm{ex}(a)$ points to this element and that $s_\sless = s_\sgreater$.
    \end{proof}

\subsection{Algorithmic aspects}\label{sec:algorithmic-aspects}

We now consider algorithmic aspects in the proof of \cref{thm:disjoint_families_nip}. In 
this section, $\CC$ is assumed to be a class of finite structures.
We show that certain sets of formulas admit indiscernible extractions running in fpt time, which will be required in \cref{sec:wideness}. 
First, we need the following notion of indiscernible extraction.

\begin{definition}\label{def:admitting_indiscernible_extraction}
Let $f:\NN \rightarrow \NN$ and $T:\NN \times \NN \rightarrow \NN$ be functions and $\Delta$ be a set of formulas.
We say that a  class $\CC$ of finite structures \emph{admits $(f,T,\Delta)$-indiscernible extraction}, if there exists an algorithm that, given a sequence $J$ of length $f(m)$ in any structure $\struc A \in \CC$, extracts a $\Delta$-indiscernible subsequence $I \subseteq J$ of length~$m$ in time $T(|J|,|\struc{A}|)$.
\end{definition}

\begin{theorem}\label{thm:disjoint_families_nip_algorithmic}
  Let $\CC$ be a class of finite structures
  such that  $\CC$ expanded with $k$ new constant symbols $\{c_1,\ldots,c_k\}$ admits $(f,T_\Phi,\Delta_k^\Phi)$-indiscernible extraction for 
$
\Phi(x,\bar y) 
:= 
\{\mathrm{eq}_\alpha^\phi(c_i,x,\bar y) ~:~ i \in [k]\}
$.
Then
in \cref{thm:disjoint_families_nip} 
we can set $N(n) := g^k(n)$, where $g(n) := f(n) + 3$, and there exists an algorithm that, given  $\struc A\in \CC$ and a sequence $J$ together with its $\alpha$-neighborhoods,
computes $I$ and $S$ in time 
$
  \mathcal{O}
  \big(
  k \cdot T_\Phi(|J|,|\struc A|)
  +
  k^2 \cdot F_\phi(|\struc A|) \cdot |\struc A|^2
  \big),
$
where $F_\phi$ is the time required to check whether $\struc A \models \phi(a,b)$ for two elements $a,b \in \struc A$.
\end{theorem}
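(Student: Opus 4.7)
The plan is to follow the inductive construction from the proof of \cref{thm:disjoint_families_nip} essentially verbatim, tracking the algorithmic cost at each step. We maintain a pair $(S_i,I_i)$ with $|S_i|\le i$, where $I_i$ is a $\Delta^{\Phi_i}_k$-indiscernible subsequence of $J$ inside the expansion $\struc A^+_i$ that interprets the fresh constants $c_1,\dots,c_i$ as the elements of $S_i$. We initialize by setting $S_0=\varnothing$ and invoking the assumed extraction algorithm once to produce $I_0$ from $J$. In the main loop we either detect that the termination condition of \cref{thm:disjoint_families_nip} already holds on $(S_i,I_i)$ and return $(I,S):=(I_i,S_i)$, or we identify a new sample element $s_{i+1}$, discard at most three tuples from $I_i$, and re-extract a $\Delta^{\Phi_{i+1}}_k$-indiscernible subsequence to obtain $I_{i+1}$. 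The combinatorial bound $|S|<s\le k$ ensures the loop terminates after at most $k$ rounds.

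The length bound follows directly from this description. With $g(n):=f(n)+3$, starting from $|J|=g^k(m)$ and noting that each of the at most $k$ iterations trims off three tuples before calling an indiscernible extraction that needs starting length $f(\cdot)$ to output length $(\cdot)$, a straightforward induction shows that the final sequence $I$ has length at least $m$.

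The heart of the cost analysis is the per-iteration work. The dominant primitive is to decide, for pairs $(a,s)\in\struc A\times S_i$ and indices $j$, whether $a$ is $\phi$-equivalent to $s$ over $N_\alpha(\bar a_j)$, i.e., whether $\mathrm{eq}^\phi_\alpha(s,a,\bar a_j)$ holds. Crucially, because the $\alpha$-neighborhoods along $J$ are disjoint and given as input, for a fixed pair $(a,s)$ we can compute this predicate across all indices $j$ simultaneously in time $\Oof(|\struc A|\cdot F_\phi(|\struc A|))$: we iterate once through the universe, locate for each element $w$ the unique $\bar a_j$ whose $\alpha$-neighborhood contains it (if any), and update a per-index disagreement counter using $\Oof(1)$ evaluations of $\phi$. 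Over all $k\cdot|\struc A|$ pairs $(a,s)$ this totals $\Oof(k\cdot|\struc A|^2\cdot F_\phi(|\struc A|))$ per iteration, and summing the $k$ iterations produces the second term of the claimed bound.

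Given the resulting equivalence table, checking the termination condition and extracting a witness reduces to scanning the table. For each $a\in\struc A$ we test whether there exist $s_<,s_>\in S_i$ and an index $\mathrm{ex}(a)$ such that $a$ agrees with $s_<$ on the prefix strictly before $\mathrm{ex}(a)$ and with $s_>$ on the suffix strictly after; this costs $\Oof(k\cdot|I_i|)$ per element. If every $a$ passes we are done; otherwise the combinatorial claim inside \cref{thm:disjoint_families_nip} furnishes $s_{i+1}$ either as $a$ itself or as any element realizing a prescribed $\tau_-,\tau_+$-pattern on three fixed tuples near an end of $I_i$, which we locate by one additional pass through $\struc A$ consulting the same table. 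The $k$ calls to the indiscernible-extraction algorithm contribute the first term $\Oof(k\cdot T_\Phi(|J|,|\struc A|))$. The one technical point to be careful about — keeping the total count of $\phi$-evaluations at $\Oof(k^2\cdot|\struc A|^2)$ and not a factor of $|I_i|$ more — is exactly what the disjointness hypothesis on $\alpha$-neighborhoods buys us.
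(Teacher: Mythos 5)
Your proposal is correct and follows essentially the same approach as the paper: iterate the construction from \cref{thm:disjoint_families_nip} for at most $k$ rounds, with one call to the indiscernible-extraction subroutine per round, and use the disjointness of the $\alpha$-neighborhoods (together with caching the $\phi$-neighborhoods of the already-chosen sample elements) to keep the cost of finding the next sample element at $\mathcal{O}(k\cdot|\struc A|^2\cdot F_\phi(|\struc A|))$ per round rather than incurring an extra factor of $|I_i|$. The paper phrases the search for $s_{i+1}$ more directly — scan all elements for one that is $\phi$-inequivalent to every constant in $S_i$ over all but at most two $\alpha$-neighborhoods, which exists by the claim in the proof of \cref{thm:disjoint_families_nip} — whereas you recover it via the $\tau_-,\tau_+$-pattern, but the accounting and resulting bound coincide.
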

\begin{proof}
We resume the proof of \cref{thm:disjoint_families_nip} and 
assume that the extension of $\CC$ with $s$ constant symbols admits $(f,T_\Phi,\Delta_k^{\Phi_k})$-indiscernible extraction. 
    Since $k \ge s$, in particular it admits $(f,T_\Phi,\Delta_k^{\Phi_s})$-indiscernible extraction.
    Let us now analyze the runtime required to construct $I$ and $S$.
    We assume the $\alpha$-neighborhoods of every element in $J$ are given as part of the input.
    Our extraction runs for at most $s$ rounds extracting a $\Delta_k^{\Phi_s}$-indiscernible sequence in time $T_\Phi(|J|,|\struc A|)$ in every round.
    Each round, we require additional time to select an element that is marked as the next constant.
    We require that it is $\phi$-inequivalent to all the previously marked constants over all but at most two of the $\alpha$-neighborhoods of the remaining subsequence of $J$.
    In order to find it, we iterate through every unmarked element of $a\in \struc A$, every $\alpha$-neighborhood $A$ of the remaining subsequence of $J$ and every constant $a'$ marked so far and check
    \begin{enumerate}
      \item whether $a$ and $a'$ are both contained or both not contained in $A$, and      
      \item whether $a$ and $a'$ have the same $\phi$-neighborhood in $A$.
    \end{enumerate}
    
    An efficient way to do this is to mark the $\phi$-neighborhood of every newly added constant.
    Then, since the $\alpha$-neighborhoods of $J$ are disjoint, comparing a single element to all constants takes at most time $\mathcal{O}(s \cdot F_\phi(|\struc A|) \cdot |\struc A|)$ and doing so for all elements adds $|\struc A|$ as a factor.
    Combining the running times we get an upper bound for the total running time of
    \[
    \mathcal{O}
    \big(
    \underbrace{s \cdot T_\Phi(|J|,|\struc A|)}
    _{\mathclap{\text{extracting indiscernibles}}}
    \quad
    +
    \quad
    \underbrace{
      s^2 \cdot F_\phi(|\struc A|) \cdot |\struc A|^2
    }
    _\text{selecting constants}
    \big).\qedhere
    \]
\end{proof}

We conclude this section by showing that certain sets of formulas admit indiscernible extractions running in fpt time.
To this end, for $r\in\NN$ let $\mathrm{dist}_{\leq r}(x,y)$ denote the formula in the signature of graphs expressing that the vertices $x$ and $y$ are at distance at most $r$ from each other.

\begin{lemma}\label{lem:extraction_for_wideness}
For every monadically stable class of graphs $\CC$ and $r,k\in\NN$ we have the following.
Let~$\CC^+$ be the class $\CC$ expanded with constant symbols $\{c_1,\ldots,c_k\}$ and define 
\[
\Phi(x,y) 
:= 
\{\mathrm{eq}_\alpha^\phi(c_i,x, y) ~:~ i \in [k]\},
\]
where $\phi(x,y) := E(x,y)$ and $\alpha(x,y) := \mathrm{dist}_{\leq r}(x,y)$.
There exists $t \in \NN$ such that $\CC^+$ admits $(f,T,\Delta_k^\Phi)$-indiscernible extraction for
\[
    f(m) := m^t \quad \text{and} \quad T(m,n) := g(k) \cdot n ^3,
\]
where $g$ is a function independent of $\CC$, $r$ and $k$.
\end{lemma}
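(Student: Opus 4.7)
The plan is to reduce the statement to \cref{thm:poly_seqs} applied to $\CC^+$ and the finite set of formulas $\Delta_k^\Phi$. First, $\CC^+$ inherits monadic stability (and hence stability) from $\CC$, because the constants $c_1,\ldots,c_k$ can be simulated by $k$ singleton unary predicates, so that $\CC^+$ arises as a special monadic expansion of $\CC$. Moreover, $|\Delta_k^\Phi|$ is bounded by a function of $k$ alone, since up to equivalence there are only finitely many boolean combinations of the $k$ formulas in $\Phi$ and patterns have length at most $k$.

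The crux is to bound the per-formula evaluation time $f_{\CC^+\Delta_k^\Phi}(n)$ on a given graph $G\in\CC^+$ with $n$ vertices, and this will be achieved by a preprocessing phase of total cost $O(k\cdot n^3)$. Namely: (i) for each $l\in[k]$ run a breadth-first search from $c_l$ to compute the ball $B_r(c_l) = \{v\in V(G):\mathrm{dist}_G(c_l,v)\le r\}$ in time $O(n^2)$; and (ii) for every triple $(l,u,v)\in[k]\times V(G)^2$ tabulate the truth value of $\mathrm{eq}^\phi_\alpha(c_l,u,v)$, which by \cref{def:phi_equality} amounts to checking the equivalence $u\in B_r(c_l)\Leftrightarrow v\in B_r(c_l)$ together with the equality $N_G(u)\cap B_r(c_l) = N_G(v)\cap B_r(c_l)$, each of which takes $O(n)$ time. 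With the resulting table in hand, any formula $\gamma_{(\phi_1,\ldots,\phi_i)}(y_1,\ldots,y_i) = \exists x.\bigwedge_{j\in[i]}\phi_j(x,y_j)\in\Delta_k^\Phi$ (with $i\le k$) can be evaluated on a tuple $(v_1,\ldots,v_i)\in V(G)^i$ by iterating over all candidate witnesses $x\in V(G)$: each conjunct $\phi_j(x,v_j)$ is a boolean combination of atoms $\mathrm{eq}^\phi_\alpha(c_l,x,v_j)$ for $l\in[k]$, each resolved by a constant-time lookup, giving $O(k^2)$ per candidate and hence $O(k^2\cdot n)$ per formula evaluation.

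Feeding $f_{\CC^+\Delta_k^\Phi}(n)=O(k^2 n)$ into \cref{thm:poly_seqs} produces an integer $t$ (which may depend on $\CC$, $r$, and $k$) and an algorithm that, starting from any sequence of length $m^t$, extracts a $\Delta_k^\Phi$-indiscernible subsequence of length $m$ in time $O(|\Delta_k^\Phi|\cdot k\cdot n^{1/k}\cdot k^2 n) = h(k)\cdot n^{1+1/k} \le h(k)\cdot n^2$ for some $h$ depending only on $k$. Adding the one-time preprocessing of cost $O(k\cdot n^3)$ yields a total runtime of $g(k)\cdot n^3$ for a universal function $g$ independent of $\CC$ and $r$, as required.

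The only delicate step is the preprocessing: one must verify that the universal quantifier over $w$ and the boolean combinations hidden inside $\mathrm{eq}^\phi_\alpha$ can both be collapsed into $O(1)$ lookups without inflating the total cost beyond $n^3$, and this is exactly why step (ii) is budgeted $O(n)$ per triple. Once these tables are in place, the rest of the argument is a direct invocation of \cref{thm:poly_seqs}.
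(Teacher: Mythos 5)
Your high-level reduction to \cref{thm:poly_seqs}---precompute enough information to bound the per-formula evaluation time, then absorb the preprocessing into the $n^3$ budget---is exactly the paper's strategy. However, preprocessing step (ii) evaluates the wrong formula. By \cref{def:phi_equality} the \emph{last} argument of $\mathrm{eq}^\phi_\alpha$ parametrizes the ball: $\mathrm{eq}^\phi_\alpha(c_l,u,v)$ asserts that $c_l$ and $u$ are $\phi$-equivalent over $N_\alpha(v)=B_r(v)$, so its two conjuncts are $\big(c_l\in B_r(v)\leftrightarrow u\in B_r(v)\big)$ and $N_G(c_l)\cap B_r(v) = N_G(u)\cap B_r(v)$. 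What you tabulate---$u\in B_r(c_l)\leftrightarrow v\in B_r(c_l)$ and $N_G(u)\cap B_r(c_l)=N_G(v)\cap B_r(c_l)$---is the truth value of the permuted formula $\mathrm{eq}^\phi_\alpha(u,v,c_l)$, which is not what occurs in $\Phi$. As a consequence step (i) is also too weak: knowing only the $k$ balls $B_r(c_l)$ does not let you resolve the universal quantifier over $w\in B_r(v)$ for arbitrary $v$; you need $B_r(v)$ for \emph{every} vertex $v$, which is precisely what the paper's preprocessing computes.

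The fix is cheap: BFS from all $n$ vertices costs $\Oof(n^3)$, so computing every ball $B_r(v)$ and then filling the correct $k\cdot n^2$-entry table (each entry in $\Oof(n)$) stays within the $\Oof(k n^3)$ preprocessing budget, after which your bound $T_\star=\Oof(k^2 n)$ and the invocation of \cref{thm:poly_seqs} go through unchanged. The paper in fact uses a slightly lighter preprocessing---it stores each ball and marks the edge-neighborhood of each constant inside it, then evaluates each atom on demand in $\Oof(n)$, giving $T_\star=\Oof(k^2 n^2)$---but either bookkeeping yields the claimed $\Oof(g(k)\cdot n^3)$ once the correct ball is used.
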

\begin{proof}
As we are aiming for fpt running times, in the following we will denote by $g_1,g_2,g_3$ functions whose existence will be clear from the context.

As $\CC^+$ is a monadic expansion of $\Cc$, also $\Cc^+$ is monadically stable.
Note that the size of~$\Delta^\Phi_k$ only depends on $k$, and formulas in $\Delta^\Phi_k$ contain at most $k$ free variables.
Therefore, by \cref{thm:poly_seqs}, there exists a $t\in \NN$ such that in any $n$-vertex graph $G\in\CC^+$ we can extract $\Delta^\Phi_k$-indiscernible sequences of length $m$ from sequences of length $m^t$ in time $\Oof(g_1(k) \cdot n^{1/k} \cdot T_\star)$, where $T_\star$ is the time it takes to check whether $G \models \gamma(a_1,\ldots,a_k)$ for a formula $\gamma\in \Delta^\Phi_k$ and given elements $a_1,\ldots,a_k \in G$.
Let us now bound~$T_\star$.
By construction, we know that every formula in $\Delta^\Phi_k$ is of the form
\[
    \gamma(y_1,\ldots,y_{i}) = \exists x.\bigwedge_{j\in[i]}\phi_j(x,y_j)
\]
for some $i \leq k$, where each $\phi_j$ is a boolean combination of formulas from $\Phi$.
A single formula in $\Phi$ is of the form
\[
    \mathrm{eq}_\alpha^\phi(c,x,y) 
    =
    \big(
    \mathrm{dist}_{\leq r}(x, y) 
    \leftrightarrow    
    \mathrm{dist}_{\leq r}(c, y) 
    \big)
    \wedge
    \forall w.\, \mathrm{dist}_{\leq r}(y, w) 
    \rightarrow 
    \big (E(x,w) \leftrightarrow E(c,w) \big ),
\]
i.e., it asks whether $x$ and a constant element $c$ that is marked in $G$ have the same edge-neighborhood in the distance-$r$ ball around $y$ and are both contained or both not contained in it.
We assume that as a preprocessing step at the beginning of the extraction, we have calculated the distance-$r$ ball of every element in $G$ and therein marked the edge-neighborhood of every constant $\{c_1,\ldots,c_k\}$.
This can be done in time $\Oof(k \cdot n^3)$ by performing depth first searches.
After this preprocessing we can evaluate 
$G \models \mathrm{eq}_\alpha^\phi(c,a_1,a_2)$ in time $\Oof(n)$ for any $a_1,a_2 \in G$ and $c\in\{c_1,\ldots,c_k\}$.
Adding a linear factor for the existential quantifier, we can evaluate $\gamma$ in time $\Oof(g_2(k)\cdot n^2)$, which bounds $T_\star$.
Plugging $T_\star$ into the bound above given by \cref{thm:poly_seqs} and adding the cubic time preprocessing we get an overall running time of $\Oof(g_3(k) \cdot n^3)$ as desired.
\end{proof}

\section{Flatness in monadically stable classes of graphs}\label{sec:wideness}

In this section we use \cref{thm:disjoint_families_nip} to characterize monadically stable graph classes in terms of flip-flatness. 
We restate the definitions and start with the forward direction.
A \emph{flip} in a graph $G$ is a pair of sets $\mathsf{F} = (A,B)$ with $A,B \subseteq V(G)$. We write $G \oplus \mathsf{F}$ for the graph $(V(G), \{uv ~:~ uv \in E(G) \text{ xor } (u,v) \in (A \times B) \cup (B \times A)\})$.
For a set $F =\{\flip F_1,\ldots,\flip F_n\}$ of pairwise distinct flips we write $G\oplus F$ for the graph $G \oplus \flip F_1 \oplus \dots \oplus \flip F_n$.
Given a graph $G$ and a set of vertices $A\subseteq V(G)$ we call $A$ \emph{distance-$r$ independent} if all vertices in $A$ have pairwise distance greater than $r$ in $G$.

\defflipwide


\thmfuqwforward*
\begin{proof}
    Let $\CC$ be a monadically stable class of graphs and $r\in \NN$. 
    We want to show that in every graph $G\in \CC$, in every set $A \subseteq V(G)$ of size $N_{r}(m)$ we find a subset $B\subseteq A$ of size at least $m$ and a set $F$ of at most $s_{r}$ flips such that $B$ is distance-$r$ independent in $G\oplus F$.
    We will first inductively describe how to obtain the set of vertices $B$ and the set of flips $F$ and bound the runtime and values for $N_r$ and $s_r$ later.
    In the base case we have $r=0$. We can pick $B := A$ and $F := \varnothing$ and there is nothing to show.

    In the inductive case let $r = 2i + p$ for some $i\in \NN$ and parity $p\in \{0,1\}$.
    Apply the induction hypothesis to obtain $s_r$ flips $F_r$ and a set $A_r$ that  is distance-$r$ independent in $G_r := G \oplus F_r$.
    Note that for every fixed number $t$ of flips, since flips are definable by coloring and a quantifier-free formula, the class $\Cc_t$ of all graphs obtainable from graphs of $\Cc$ by at most $t$ flips is monadically stable by~\cref{lem:definable-expansions}. Hence, $G_r$ comes from the monadically stable class~$\CC_{s_r}$.
    Our goal is to find a set of flips $F'_{r+1}$ (of fixed finite size which will determine the number $s_{r+1}$) together with a set $A_{r+1} \subseteq A_r$, that is distance-$(r+1)$ independent in $G_{r+1} := G_r \oplus F'_{r+1}$.
    Then $G_r = G \oplus F_{r+1}$ with $F_{r+1} = F_r \cup F'_{r+1}$.

    Since the elements in $A_r$ have pairwise disjoint distance-$i$ neighborhoods, we can apply \cref{thm:disjoint_families_nip} to $A_r$ with $\phi(x,y)=E(x,y)$ and $\alpha(x,y) = \mathrm{dist}_{\leq i}(x,y)$.
    Since we are in the monadically stable case, this yields a subset $A_{r+1} \subseteq A_r$ and a small set of sample vertices $S$ such that for every vertex $a\in V(G)$ there exists $s(a) \in S$ and $\mathrm{ex}(a)\in A_{r+1}$ such that $a$ has the same edge-neighborhood as $s(a)$ in the distance-$i$ neighborhoods of all elements from $A_{r+1}\setminus \{\mathrm{ex}(a)\}$.
    We now do a case distinction depending on whether $r$ is even or odd.

    \paragraph*{The even case: $\boldsymbol{r=2i}$.}
    Our goal is to flip away every edge between a pair of two elements contained in distinct distance-$i$ neighborhoods of elements from $A_{r+1}$.
    Let us first show that
    there exists a coloring $\mathrm{col} : V(G) \rightarrow \KK$, where $\KK$ is a set of at most $|S|\cdot 2^{|S|}$ many colors and a relation $R\subseteq \KK^2$, 
    such that for any two elements $a_1$, $a_2$ contained in distinct distance-$i$ neighborhoods of elements from $A_{r+1}$,
    \begin{equation}\label{eq:color_iff_edge}
    G_r \models E(a_1,a_2)
    \hspace{2mm} \iff \hspace{2mm}
    (\mathrm{col}(a_1),\mathrm{col}(a_2)) \in R.
    \end{equation}
    
    We construct the coloring as follows.
    To every $a \in V(G)$ we assign a color $\mathrm{col}(a)$ from $S\times \mathcal{P}(S)$, where
    in the first coordinate we store $s(a)$ and in the second coordinate 
    we store the set of all $b\in S$ with $G_r \models E(a,b)$. We define $R$ to contain all pairs $((s(x),X), (s(y),Y))$, where $x,y\in V(G_r)$ and $X,Y\subseteq S$, such that $s(x)\in Y$. 
    To argue the correctness of this coloring,
    assume now we have a vertex $a_1$ in the distance-$i$ neighborhood of a vertex $a_1'\in A_{r+1}$ and a vertex $a_2$ in the distance-$i$ neighborhood of a vertex $a_2'\in A_{r+1}$.
    Let $a_1' \neq a_2'$
    and $(s(a_1),C_1)$ and $(s(a_2),C_2)$ be the colors of $a_1$ and $a_2$ respectively.
    By \cref{thm:disjoint_families_nip} we can assume that $\mathrm{ex}(a_1) = a_1'$ and $\mathrm{ex}(a_2) = a_2'$.
    Note that~$a_2$ is not contained in the distance-$i$ neighborhood of $\mathrm{ex}(a_1)$, since otherwise $\mathrm{ex}(a_1) = \mathrm{ex}(a_2)$.
    Thus, $a_1$ has the same adjacency to $a_2$ as $s(a_1)$, yielding
    \[
        G_r \models E(a_1,a_2) 
        \iff 
        G_r \models E(s(a_1),a_2)
         \iff
        s(a_1) \in C_2
        \iff
        \big((s(a_1),C_1),(s(a_2),C_2)\big) \in R.
    \]

    Let $D$ be all vertices with distance exactly $i$ in $G_{r}$ to one of the elements in $A_{r+1}$.
    We may assume our colors to be ordered by some total order $\le$.
    Based on the coloring, we construct the set of flips $F'_{r+1}$ containing for every $C_1 \le C_2 \in \mathcal K$ with $(C_1,C_2) \in R$ the flip 
    $ (\{ x \in D ~:~ \mathrm{col}(x) = C_1 \}, \{ x \in D ~:~ \mathrm{col}(x) = C_2 \})$.
    
    Let us now argue that $A_{r+1}$ is distance-$(r+1)$ independent in $G_{r+1}$.
    Since we only flipped edges between vertices at distance exactly $i$ to vertices from $A_{r+1}$, we have that $A_{r+1}$ remains distance-$2i$ independent in $G_{r+1}$.
    Assume therefore towards a contradiction that there exists a path $P = (a,a_1,\ldots,a_i,b_i,\ldots,b_1,b)$ of length $r+1=2i+1$ between two distinct vertices $a,b \in A_{r+1}$ in $G_{r+1}$.

    By symmetry, we can assume $\mathrm{col}(a_i) \le \mathrm{col}(b_i)$.
    Note that there exists either zero or one flip in $F'_{r+1}$ flipping the adjacency between $a_i$ and $b_i$,
    with one flip existing between the two vertices if and only if $(\mathrm{col}(a_i),\mathrm{col}(b_i)) \in R$.
    By (\ref{eq:color_iff_edge}), this is equivalent to $a_i$ and $b_i$ being adjacent in $G_r$.
    Thus, going from $G_r$ to $G_{r+1}$, the adjacency between $a_i$ and $b_i$ is flipped if and only if they were adjacent in $G_r$,
    meaning that $a_i$ and $b_i$ are non-adjacent in $G_{r+1}$.
    We can conclude that $P$ is not a path in~$G_{r+1}$, and that $A_{r+1}$ is distance-$(r+1)$ independent in $G_{r+1}$.

    \paragraph*{The odd case: $\boldsymbol{r=2i + 1}$.}
    For every $s \in S$ let $C_s$ be the set containing every vertex $a$ for which we have $s(a) = s$ and that is at distance at least $i+1$ from every vertex in $A_{r+1}$.
    Let~$D_s$ be the set containing every edge-neighbor of $s$ that is at distance exactly $i$ from one of the vertices in $A_{r+1}$.
    We now build the set of flips $F'_{r+1}$ by adding for every sample vertex $s \in S$ the flip $(C_s,D_s)$.

    Let us now argue that $A_{r+1}$ is distance-$(r+1)$ independent in $G_{r+1}:= G_r$.
    We only flip edges in $G_r$ between pairs of vertices $a,b$
    such that $a$ has distance (in $G_r$) at least $i+1$ and $b$ has distance exactly $i$ to $A_{r+1}$.
    It follows that $A_{r+1}$ remains distance-$(2i + 1)$ independent in $G_{r+1}$.
    Assume towards a contradiction that there exists a path $P = (a,a_1,\ldots,a_i,u,b_i,\ldots,b_1,b)$ of length $r+1=2i+2$ between two vertices $a,b \in A_{r+1}$ in~$G_{r+1}$.
    The distance between $a$ and $a_i$ (resp. $b$ and $b_i$) is not affected by the flips. Only the connection between $a_i$ (resp. $b_i$) and $u$ can possibly be impacted.
    Additionally, note that $u\in C_{s(u)}$.

    Since $a$ and $b$ are distinct we have that either $\mathrm{ex}(u) \neq a$ or $\mathrm{ex}(u) \neq b$. By symmetry, we can assume the former case.
    As $a_i$ is in the distance-$i$ neighborhood of $a$, we have $G_r\models E(u,a_i) \iff  G_r \models E(s(u),a_i)$.
    Observe that if $E(s(u),a_i)$ holds in $G_r$, then $a_i \in D_{s(u)}$ and therefore the edge $(u,a_i)$ was removed by the flips and is not in $G_{r+1}$.
    Similarly, if $E(s(u),a_i)$ does not hold in $G_r$, then $a_i \not\in D_{s(u)}$
    and the edge $(u,a_i)$ was not introduced by any flip.
    We can conclude that $P$ is not a path in~$G_{r+1}$, and that
    $A_{r+1}$ is distance-$(r+1)$ independent in $G_{r+1}$.

    \paragraph*{Size bounds and runtime analysis.}
    We will analyze the bounds $N_r$ and $s_r$ by following the inductive construction of $A_r$ and $F_r$.
    In the base case we have $N_{0}(m) := m$ and $s_{0} := 0$.

    Before handling the inductive step, we will first analyze runtime and size bounds of \cref{thm:disjoint_families_nip}, which is the main tool used during the construction.
    For $i \in \NN$ we denote by $\CC_i$ the monadically stable class of all graphs obtainable by performing at most $s_i$ flips on graphs of $\CC$.
    Let~$k_{i}$ be the sample set bound given by \cref{thm:disjoint_families_nip} for the class $\CC_{i}$ and the formulas $\phi(x,y) := E(x,y)$ and $\alpha_i(x,y) := \mathrm{dist}_{\leq i}(x,y)$.
    Observe that $k_{i}$ depends only on $\CC$ and $i$.
    Let $\Phi_{i}(x,y)$ be the corresponding set of formulas specified in \cref{thm:disjoint_families_nip} for $\phi$ and $\alpha_i$.
    Remember the notion of \emph{indiscernible extraction} given by \Cref{def:admitting_indiscernible_extraction}.
    Since $\phi$ is the edge relation and $\alpha_i$ is a distance formula we can apply \cref{lem:extraction_for_wideness} and we get that $\CC_i$ expanded with $k_{i}$ constant symbols admits 
    \[
        \big(m \mapsto m^{t_i}, (m,n) \mapsto g(k_{i}) \cdot n^3, \Delta^{\Phi_{i}}_{k_{i}} \big)
        \text{-indiscernible extraction}
    \]
    for an integer $t_i$ depending again only on $\CC$ and $i$ and a function $g$. 
    Plugging this result back into \cref{thm:disjoint_families_nip_algorithmic}
    we see that in the class $\CC_i$ for $\phi$ and $\alpha_i$ we can extract the desired subsequences of length~$m$ from sequences of length $m^{t_i'}$
    in time $\Oof(h_i(\CC,i) \cdot n^3)$ for an integer $t_i'$ and function $h_i$.

    In the inductive case we first apply the induction hypothesis resulting in a graph from $\CC_{i}$.
    We work on the set $A_i$ of size $N_i(m)$ where we apply \cref{thm:disjoint_families_nip} to extract $A_{i+1}$. 
    We can therefore set $N_{i+1}(m) := N_i(m)^{t_i'} = m^{t_1' \cdot \ldots \cdot t_i'}$.
    As desired, $N_{i+1}$ depends only on $\CC$ and $i$.

    To count the flips in the inductive cases, notice that
    in the odd case we create a flip for each of the~$k_i$ sample vertices.
    In the even case we create a flip for each pair of colors 
    with the number of colors being bounded by $k_i\cdot 2^{k_i}$.
    We can therefore set 
    $s_{i+1} := s_i + (k_i\cdot 2^{k_i})^2 = \sum_{j \in [i]}  (k_j\cdot 2^{k_j})^2$.
    As desired, $s_{i+1}$ depends only on $\CC$ and $i$.

    Regarding the running time, we see that the time needed to compute $A_{i+1}$ and $F_{i+1}$ is dominated by applying \cref{thm:disjoint_families_nip} $i$ times.
    We can therefore bound the total run time of our construction by
    $\Oof(f_\CC(r) \cdot n^3)$ for some function~$f_\CC$.

\end{proof}

We have shown that for graph classes, monadic stability implies flip-flatness. We now show that the reverse holds as well.
We will use the following statement which is an immediate consequence of Gaifman's locality theorem~\cite{gaifman82}.
For an introduction of the locality theorem see for example~\cite[Section 4.1]{grohe2008logic}.

\begin{restatable}[of {\cite[Main Theorem]{gaifman82}}]{corollary}{gaifmancoloring}\label{lem:gaifman-coloring}
      Let $\phi(x,y)$ be a formula. Then there are numbers $r,t\in \NN$, where $r$ depends only on the quantifier-rank of $\phi$ and $t$ depends only on the signature and quantifier-rank of $\phi$,
      such that every (colored) graph $G$ can be vertex-colored using $t$ colors
      in such a way that for any two vertices $u,v\in V(G)$ with distance greater than $r$ in $G$,
      $G\models\phi(u,v)$ depends only on the colors of $u$ and $v$.
      We call $r$ the \emph{Gaifman radius} of $\phi$.
\end{restatable}

\begin{lemma}\label{lem:fuqw_backward}
   Every flip-flat class of graphs is monadically stable.
\end{lemma}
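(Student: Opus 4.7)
The plan is to prove the contrapositive: if $\CC$ is not monadically stable, then it is not flip-wide. By \cref{lem:mon-stab1}, non-monadic-stability of $\CC$ yields a monadic extension $\Sigma^+$ of $\Sigma$ and a $\Sigma^+$-formula $\phi(x,y)$ with the order property over $\CC[\Sigma^+]$; that is, for every $n$ there exist $(G_n,c_n)\in\CC[\Sigma^+]$ and vertices $a_1,\dots,a_n,b_1,\dots,b_n$ such that $(G_n,c_n)\models\phi(a_i,b_j)\iff i\le j$. I will derive a contradiction from the assumption that $\CC$ is flip-wide.

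The technical heart of the argument, and the step I expect to require the most care, is reconciling flips with Gaifman locality. For a fixed bound $s$ on the number of flips, any flip set $F=\{(A_1,B_1),\dots,(A_s,B_s)\}$ can be represented by $2s$ fresh unary predicates marking the sets $A_i,B_i$, and the XOR operation defining $G\oplus F$ from $G$ is then quantifier-free in the enlarged signature. Hence there is a formula $\phi'(x,y)$ of the same quantifier rank as $\phi$, in a signature enlarged only by $2s$ unary symbols, such that
\[
(G,c)\models\phi(u,v)\ \iff\ (G\oplus F,c,P_1,Q_1,\dots,P_s,Q_s)\models\phi'(u,v)
\]
whenever the $P_i,Q_i$ interpret the sides of the flips in $F$. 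Applying \cref{lem:gaifman-coloring} to $\phi'$ yields a Gaifman radius $r$ depending only on the quantifier rank of $\phi$ (crucially, \emph{not} on $s$) together with a color bound $t$ depending on the (finite) signature, hence on $s$. This breaks an apparent circularity: one first fixes $r$, then sets $s:=s_{2r}$ from the assumed flip-wideness at radius $2r$, and only afterwards obtains a fixed $t$.

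With $r,t,s_{2r}$ fixed, pick $m:=6t^2$ and $n:=N_{2r}(m)$. Apply flip-wideness of $\CC$ at radius $2r$ to $G_n$ with $A:=\{a_1,\dots,a_n\}$, yielding a flip set $F$ of size at most $s_{2r}$ and a set $B_a=\{a_{i_1},\dots,a_{i_m}\}\subseteq A$ (with $i_1<\cdots<i_m$) that is distance-$2r$-independent in $G_n\oplus F$. By the triangle inequality, no vertex of $G_n\oplus F$ lies within distance $r$ of two distinct elements of $B_a$, so for each $\ell\in[m]$ there is at most one ``exceptional'' index $\mathrm{ex}(\ell)\in[m]$ with $\mathrm{dist}_{G_n\oplus F}(a_{i_{\mathrm{ex}(\ell)}},b_{i_\ell})\le r$. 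Color the expanded structure $(G_n\oplus F,c_n,P_1,Q_1,\dots)$ with the $t$ Gaifman colors for $\phi'$, and by pigeonhole on the color pairs $(\mathrm{col}(a_{i_k}),\mathrm{col}(b_{i_k}))$ for $k\in[m]$ find $K\subseteq[m]$ with $|K|\ge m/t^2=6$ such that all $a_{i_k}$ for $k\in K$ share a single color and all $b_{i_k}$ for $k\in K$ share a single color.

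For any distinct $k,\ell\in K$ with $k\ne\mathrm{ex}(\ell)$, the pair $(a_{i_k},b_{i_\ell})$ is at distance greater than $r$ in $G_n\oplus F$, so by the Gaifman property $\phi'(a_{i_k},b_{i_\ell})$ depends only on the two fixed colors and thus equals a constant truth value $T$. On the other hand, the order property forces $\phi'(a_{i_k},b_{i_\ell})\iff k\le\ell$. Taking $\ell$ to be a median element of $K$, since $|K|\ge 6$ one finds $k_1,k_2\in K\setminus\{\ell,\mathrm{ex}(\ell)\}$ with $k_1<\ell<k_2$, which forces $T$ to be simultaneously true (via $k_1$) and false (via $k_2$), a contradiction. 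Hence $\CC$ is not flip-wide, completing the proof.
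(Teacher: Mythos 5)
Your proof is correct, and while it shares the paper's main machinery (encoding flips by unary predicates so a bounded flip set does not increase quantifier rank, then applying Gaifman locality and pigeonholing on Gaifman colors), it takes a genuinely different route at the starting point. The paper first passes to the standard ``single-sequence'' form of instability: it fixes a formula $\sigma(x,y)$ that linearly orders a sequence $(a_1,\ldots,a_n)$ in a coloring of $G\in\CC$ (which uses the folklore equivalence of the two-sequence order property with definability of arbitrarily long orders, not reproved in the paper). It then applies flip-wideness at the Gaifman radius $r$ directly to that sequence and derives a contradiction from two same-colored vertices $u,v$, since both $\sigma_r(u,v)$ and $\sigma_r(v,u)$ depend only on their colors. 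You instead work directly with the two sequences $(a_i)$ and $(b_i)$ from the order property, apply flip-wideness at radius $2r$ to the $a$'s only, and then handle the uncontrolled $b$'s via the triangle inequality: since the selected $a$'s are pairwise farther than $2r$ apart, each $b_{i_\ell}$ has at most one exceptional $a_{i_{\mathrm{ex}(\ell)}}$ within distance $r$. Pigeonholing on color \emph{pairs} $(\mathrm{col}(a_{i_k}),\mathrm{col}(b_{i_k}))$ rather than single colors, and taking a median $\ell$ so that you can avoid both $\ell$ and $\mathrm{ex}(\ell)$ on either side, then yields the same contradiction. Your version is a bit more self-contained (no appeal to the order-to-linear-order conversion) at the cost of the $2r$-radius trick, the exceptional-index bookkeeping, and a slightly larger pigeonhole target ($6t^2$ instead of $t+1$). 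Both are sound.
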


\begin{proof}
    Assume towards a contradiction that there exists a class $\CC$ that is not monadically stable but flip-flat.
By definition of monadic stability there exists a formula $\sigma(x,y)$ defining arbitrarily large orders in a coloring of~$\CC$, that is, for every $n\in \NN$ there exists a graph $G \in \CC$ and a coloring $G^+$ such that we find a sequence $(a_1,\ldots,a_n)$ in $G^+$ with  $G^+ \models \sigma(a_i,a_j)$ if and only if $i<j$ for all $i,j \in [n]$.
    
    Let $r$ be the Gaifman radius of $\sigma$ as given by \cref{lem:gaifman-coloring} (depending only on the quantifier-rank of~$\sigma$).
    Let $N_r$ and $s_r$ be the size function and number of flips we obtain from $\Cc$ being flip-flat with radius $r$. As stated in \cref{lem:gaifman-coloring}, let $t_r$ be the number of colors needed to determine the truth value of formulas in
    the signature of graphs with $2s_r$ additional unary predicates and the same quantifier-rank as $\sigma(x,y)$. Let $n:=N_r(t_r + 1)$ and fix a graph $G\in \Cc$ such that in $G^+$ we find a sequence $I$ of length $n$ ordered by $\sigma$.
    
We rewrite $\sigma$ into a formula $\sigma_{r}$ with the same quantifier rank as~$\sigma$ such that for every set $F$ of at most~$s_r$ flips there exists a coloring $H^+$ of $H := G \oplus F$ such that $G^+ \models \sigma(u,v)$ if and only if $H^+ \models \sigma_{r}(u,v)$ for all $u,v \in V(G)$. This translation depends only of $\sigma$ and $s_r$ and its existence is easily proven by induction on the number of flips by the following translation of atomic formulas $E(x,y)$. If a flip $\mathsf F=(A,B)$ is marked by two unary predicates~$A$ and $B$, then for all vertices $u,v$ we have $G\models E(u,v)$ if and only if 
$G \oplus \mathsf F\models E(u,v)~\mathrm{xor}~((u\in A\wedge v\in B) \vee (u\in B\wedge v\in A))$. 
        
    We apply flip-flatness to $I$ and find a subsequence $J \subseteq I$ of length $t_r+1$ together with a set~$F$ of at most $s_r$ many flips such that $J$ is $r$-independent in $H = G \oplus F$. By construction, $\sigma_r$ orders $J$ in $H^+$. 
    
As $\sigma_r$ has the same quantifier-rank as $\sigma$ and is a formula over the signature of graphs extended by~$s_r$ unary predicates, by \cref{lem:gaifman-coloring} there exists a coloring 
of $H^+$ with $t_r$ colors such that the truth of $\sigma_r(u,v)$ only depends on the colors of $u$ and $v$ for all $u,v \in J$.
By the pigeonhole principle there exist two distinct vertices $u,v \in J$ that are assigned the same color. 
We therefore have $H^+ \models \sigma_r(u,v) \iff \sigma_r(v,u)$,
which is a contradiction to $\sigma_r$ ordering $J$ in~$H^+$.
\end{proof}

From \cref{thm:fuqw_forward} and \cref{lem:fuqw_backward} we conclude the following.

\thmfuqw*

\bibliography{ref}

\end{document}